\long\def\longdelete#1{}
\newcommand{\costalg}{|\textsc{SSOP}|}
\newcommand{\costopt}{|\textsc{OPT}|}
\newcommand{\approxNT}{\rho}
\newcommand{\smartstart}{\textsc{SmartStart}}
\newcommand{\ouralg}{\textsc{SSOP}}
\newcommand{\pah}{\textsc{PAH}}
\newcommand{\ignore}{\textsc{Ignore}}
\newcommand{\replan}{\textsc{Replan}}
\newcommand{\delaytrust}{\textsc{DelayTrust}}
\algrenewcommand\algorithmicrequire{\textbf{Input:}}
\algrenewcommand\algorithmicensure{\textbf{Output:}}
\title{Waiting is worth it and can be improved with predictions}
\author{Ya-Chun Liang}{
Department of Computer Science, National Tsing Hua University, Hsinchu, Taiwan}{liangyc@gs.ncku.edu.tw}{}{}
\author{Meng-Hsi Li}{Department of Industrial Engineering and Engineering Management,
National Tsing Hua University, Hsinchu, Taiwan}{ted870924@gapp.nthu.edu.tw}{}{}
\author{Chung-Shou Liao}{Department of Industrial Engineering and Engineering Management,
National Tsing Hua University, Hsinchu, Taiwan \\ 
Department of Electrical Engineering,
National Taiwan University, Taipei, Taiwan}{csliao@ie.nthu.edu.tw}{}{}
\author{Clifford Stein}{Department of Industrial Engineering and Operations Research, Columbia University, New York, USA}{cliff@ieor.columbia.edu}{}{}
\authorrunning{Y.-C. Liang, M.-H. Li, C.-S. Liao, and C. Stein}
\keywords{Dial-a-Ride problem, traveling salesman problem, online algorithm, competitive ratio, algorithm with predictions}
\begin{document}

\maketitle



\begin{abstract}
We revisit the well-known online traveling salesman problem (OLTSP) and 
its extension, the online dial-a-ride problem (OLDARP).
A server 
starting at a designated origin in a metric space, is required to serve online 
requests, 
and return to the origin such that the completion time is minimized. The \smartstart~algorithm, introduced by Ascheuer et al., incorporates a waiting approach into an online schedule-based algorithm and attains the 
optimal 
upper bound of
2 for 
the OLTSP and
the OLDARP if each 
schedule
is optimal. 
Using the Christofides’ 
heuristic
to approximate each schedule leads to the 
currently best upper bound of 
$\frac{7+\sqrt{13}}{4} \approx$ 2.6514 in polynomial time.

In this study, we investigate how an online algorithm with predictions, a recent popular framework (i.e. the so-called learning-augmented algorithms), can be used to improve the best competitive ratio
in polynomial time. 
In particular, we develop a waiting strategy with \emph{online predictions}, 
each of which is only a binary decision-making for every schedule in a whole route,
rather than forecasting an entire set of requests in the beginning (i.e. \emph{offline predictions}). 
That is, it does not require knowing the number of requests in advance.
The proposed online schedule-based algorithm can 
achieve $(1.1514\lambda+1.5)$-consistency and $(1.5+\frac{1.5}{2.3028\lambda-1})$-robustness 
in polynomial time,   
where $\lambda \in (\frac{1}{\theta},1]$ 
and $\theta$ is set to 
$\frac{1+\sqrt{13}}{2} \approx$ 2.3028.
The best consistency 
tends to approach to~2 when $\lambda$ 
is close to
$\frac{1}{\theta}$.  
Meanwhile, 
we show any online schedule-based algorithms cannot derive a competitive ratio of less than 2 even with perfect online predictions. 

\end{abstract}


\section{Introduction}


In many real-world scenarios,  
input information is not initially known but 
gradually becomes clear over time. For instance, 
there are always uncertainties  
during various stages of the manufacturing process on a production line.  There are numerous unforeseen events that cannot be predicted well, 
like: 
unexpected equipment failures, material shortages, or variations in production demands. To tackle such problems, we often 
exploit statistical 
learning techniques or design online algorithms to provide 
an approximation
solution. 

In the field of machine learning, a learning model utilizes the analysis of past acquired information to make predictions about the future. However, 
it usually lacks theoretical guarantees to ensure the quality of its predictions. On the other hand, online algorithms provide a theoretical evaluation using a competitive ratio, which is defined to be the ratio of the optimal offline cost to the worst-case one of an online algorithm. The worst-case competitive analysis of an online algorithm is sometimes too pessimistic, in which worst cases seldom occur in real-world applications 
though.

\emph{A learning-augmented algorithm} combines the two strategies and provides a new way; precisely, it 
involves incorporating a prediction strategy into an online algorithm. 
Unlike pure online algorithms, learning-augmented algorithms further discuss the \emph{consistency} and \emph{robustness}, rather than only competitive ratios.
Briefly speaking, a learning augmented algorithm achieves good performance when predictions are assumed to be perfectly accurate, which is called \emph{consistency}. On the other hand, if the predictions are terribly poor, the algorithm can still maintain a theoretical bound in the worst case, which is called \emph{robustness}.

In recent years, there has been a rapid growth of employing different predictive strategies to design learning-augmented algorithms for a variety of online optimization problems. These problems span various domains, including ski-rental \cite{NEURIPS2021_Antoniadis_ski, kumar2018improving_ski,NEURIPS2020_Wang_ski, wei2020optimal_ski}, scheduling \cite{SODA22_Sara_scheduling,ijcai2022p636_scheduling}, bidding \cite{NIPS2017_Munoz_bidding}, knapsack \cite{NEURIPS2021_Im_knapsack}, covering \cite{NIPS20_Bamas_covering, NEURIPS2022_Grigorescu_covering}, paging \cite{SODA22_Nikhil_paging,SODA20_Dhruv_paging}, matching \cite{NEURIPS2022_Dinitz_matching,NEURIPS2022_Jin_matching}, bin packing \cite{angelopoulos2018binpacking,ijcai2022p635_bin} and graph algorithmic problems \cite{SODA22_Yossi_graph,ICML22_Chen_graph}.

In this study, 
we discuss the well-known online traveling salesman problem (OLTSP) and its
extension,
the online dial-a-ride problem (OLDARP).
For the OLTSP, 
Ausiello~et~al.~\cite{ausiello2001algorithms} proposed a 2-competitive algorithm for the OLTSP, called \pah, if each offline TSP route is optimal. It achieves 3-competitive in polynomial time when using the Christofides' heuristic. Moreover, there are other types of online algorithms including \replan, \ignore, and \smartstart, introduced in~\cite{ascheuer2000online} by Ascheuer~et~al. 
Note that the latter two belong to the category of \emph{schedule-based} algorithms~\cite{phdthesis2020,zhang}.
A schedule-based algorithm enables the server to follow a predetermined (i.e. offline) walk during each schedule and ignore any new incoming requests that arrive while performing the current schedule. The deferred requests are incorporated into the construction of the next schedule.
In other words, the solution produced by a schedule-based algorithm may comprise one or more such schedules. The formal definition of schedule-based algorithms will be presented in the next section. 
In particular, the \smartstart~algorithm has the currently best upper bound of 2.6514 for the OLTSP and the OLDARP,
when using the Christofides' heuristic to approximate each schedule.  
Readers may refer to a summary of all these online algorithms for the OLTSP and the OLDARP~\cite{phdthesis2020} by Birx.

For the related works of online routing problems under the learning-augmented framework, 
Bernardini et al.~\cite{bernardini2022a} focused on defining a universal error measure which can be used for different problems, including the OLTSP and the OLDARP. They predicted a sequence of requests and applied a graph network to compute the prediction errors between the predicted sequence of requests and the actual one. It is important to notice that when predicting requests, both a position error and an arrival time error may occur for each request. Then, they proposed the \delaytrust~algorithm, which combines an online algorithm with a predicted route; that is, allowing the server to initially perform a pure online algorithm and then switch to the predicted route at a predefined time point. The server terminates at the origin when receiving an extra \emph{end signal}. 
Hu et~al.~\cite{Hu2022arxiv} introduced another prediction strategy over the arrival time of the last request, rather than a whole sequence of requests. 
While the whole sequence prediction can achieve a good result if the prediction error is small, it may accumulate significant losses over an input sequence if the prediction is poor. 
On the other hand, 
the new prediction strategy over only the last request actually limits the improvement. 
Note that the prediction models proposed in \cite{bernardini2022a} and \cite{Hu2022arxiv} both utilized the concept of \emph{offline predictions}. Precisely, they predicted the entire set of all requests or decided a predictive strategy before conducting an algorithm. 
Very recently, Shao et al. \cite{shao2023online} presented the concept of \emph{online predictions} 
which 
predicts the next online request whenever a new actual request arrives for the OLTSP. Through online predictions, a learning-augmented algorithm can make an adaptive prediction based on the malicious adversary's feedback. It may work better and compensate some loss if an initial (offline) prediction is terrible, but the complexity of its competitive analysis surely
increases. 
Here, we remark that using online predictions does not require knowing the total number of requests $n$ in advance, which contrasts with most prior studies in the learning-augmented framework.
When $n$ is known beforehand, a naive polynomial-time 5/2-approximation algorithm exists, where the server begins the route only after all requests have been released~\cite{nagamochi1997complexity}.

\medskip

\noindent{\bf Our contribution.}
First, we propose a lower bound for any schedule-based algorithms with online predictions. This lower bound demonstrates that, 
given
a schedule even with 
some
upcoming requests that can be perfectly predicted, no schedule-based algorithms with online predictions can achieve a competitive ratio less than 2.

Second, while most learning-augmented algorithms focused on predicting 
a 
full
solution (e.g., a 
whole
route for the OLTSP),  
we look into the waiting duration of the \smartstart~algorithm that achieves the currently best upper bound for the OLTSP and the OLDARP. We propose a learning-augmented algorithm which incorporates \emph{online predictions} into \smartstart, denoted by \ouralg. 
Precisely,
\ouralg~predicts a binary decision for each schedule, choosing either to start 
earlier
or to wait longer, 
comparing to \smartstart's waiting strategy. It takes advantage of online predictions to reduce the impact of significant prediction errors that may occur with offline predictions.
While the competitive ratio of \smartstart~is $\max \{ \theta, \approxNT (1+\frac{1}{\theta-1} ), \frac{\theta}{2}+\approxNT\}$, where $\theta$ is a given waiting scaling parameter and $\approxNT$ is an approximation ratio of a schedule, \ouralg~achieves the upper bound: 
$\max\{\lambda \theta+\frac{\lambda \theta \varepsilon_f}{\costopt}, \approxNT (1+ \frac{1}{\frac{\theta}{\lambda}-1} ) + \frac{\varepsilon_f}{\costopt}, (\frac{\lambda\theta}{2}+\approxNT) \}$-consistency
and  $\max\{ \frac{\theta}{\lambda},\approxNT (1+ \frac{1}{\lambda\theta-1} ),  (\frac{\theta}{2\lambda}+\approxNT)\}$-robustness, 
where 
$\lambda$ is a confidence level parameter and 
$\lambda \in (\frac{1}{\theta},1]$.
We refer to~\cite{kumar2018improving_ski} and derive a balance between the performance from \smartstart~and online predictions, leading to a trade-off between consistency and robustness. Note that when the value of $\theta$ is set to be $\frac{1+\sqrt{13}}{2} \approx 2.3028$, it not only allows \smartstart~to achieve the ratio of 2.6514 with an approximate schedule using the Christofides' algorithm ($\approxNT = 1.5$), but lets
\ouralg~achieve a competitive ratio of about $(1.1514\lambda+1.5)$-consistency and $(1.5+\frac{1.5}{2.3028\lambda-1})$-robustness 
(see Section~\ref{sec:alg} (Corollary~\ref{cor:result})).
One can observe that \ouralg~can achieve its lowest consistency ratio when $\lambda$ is close to $\frac{1}{\theta}$, which almost meets the lower bound of 2. That is, 
our upper bound approaches the optimal ratio very closely.

\begin{table}[h!]
\centering
\begin{tabular}{|p{1.2cm}|p{4cm}|p{9.7cm}|}
\hline
\textbf{Bound} & \smartstart \cite{ascheuer2000online} & \text{\smartstart~with Online Predictions (\ouralg)} \\
\hline
\textbf{UB} & 
$\max\left\{ \theta,\, \rho\left(1 + \frac{1}{\theta - 1}\right),\, \frac{\theta}{2} + \rho \right\}$ & 
$\min\left\{
\begin{aligned}
& \max\left\{ \lambda \theta + \frac{\lambda \theta \varepsilon_f}{\costopt},\, \approxNT \left(1 + \frac{1}{\frac{\theta}{\lambda} - 1} \right) + \frac{\varepsilon_f}{\costopt},\, \left(\frac{\lambda \theta}{2} + \approxNT\right) \right\}, \\
& \max\left\{ \frac{\theta}{\lambda},\, \approxNT \left(1 + \frac{1}{\lambda \theta - 1} \right),\, \left( \frac{\theta}{2\lambda} + \approxNT \right) \right\}
\end{aligned}
\right\}$ \\
\hline
\textbf{UB} \parbox[t]{2cm}{($\rho = 1.5$)} & 2.6514 & 
$\min\left\{ 1.1514\lambda + 1.5,\, 1.5 + \frac{1.5}{2.3028\lambda - 1} \right\}$ \\
\hline
\textbf{LB} & $2$ \cite{phdthesis2020} & 2 \\
\hline
\end{tabular}
\caption{Comparison between \smartstart~and \ouralg~on UB, UB ($\rho = 1.5$), and LB, where $\lambda \in (\frac{1}{\theta}, 1]$}
\end{table}

\noindent{\bf Technical Overview.}
Here, we present the rationale behind 
our prediction strategy, i.e.,
predicting a binary decision for each schedule. 
As mentioned, the currently best \smartstart~algorithm is a non-zealous schedule-based algorithm for the OLTSP and the OLDARP. 
In a schedule-based approach, once the server 
begins performing
a schedule, it ignores all new incoming requests that arrive during the 
current schedule.
These requests will instead be considered in the construction of the next schedule. Furthermore, being non-zealous means the server does not immediately begin a schedule as soon as unserved requests 
exist or arrive.
Instead, 
the server
waits at the origin to determine a proper starting time. 
The key idea behind \smartstart~is to strategically delay the start of each schedule in order to potentially include 
sufficient incoming
requests. A longer waiting period may allow the server to collect more requests, 
leading
to 
a more effective schedule.
However, if no new requests arrive during this 
period,
the delay results in wasted waiting. In such 
a scenario,
starting the schedule earlier could reduce the overall completion time.

To address this trade-off, we propose using 
online
predictions to make a binary decision for each schedule, choosing either to start 
earlier
or to wait longer. 
This approach aims to wait \emph{intelligently} at the origin by relying on online predictions that adapt to the dynamics of incoming requests.

\section{Preliminary}
For the 
online traveling salesman problem (OLTSP),
we consider a sequence of requests that arrive over time, denoted by 
$X = \{x_1,x_2,...\}$. Importantly, the sequence is revealed in an online fashion, and the total number of requests $n$ is not known in advance.
Each request $x_i$ is represented by $x_i = (a_i, t_i)$, where request $x_i$ 
arrives 
at
position $a_i$ at time $t_i$.
Note that the server can 
serve $x_i$ 
only
after time $t_i$. 
Regarding 
the extension of
the OLTSP, 
the online dial-a-ride problem (OLDARP), each request $x_i$, arriving 
at
position $a_i$ at time $t_i$, needs to be delivered to $b_i$, denoted by $x_i = (a_i, b_i, t_i)$, in comparison with the OLTSP in which each request has a single position to be served.


In both the OLTSP and the OLDARP, we assume that the server moves with unit speed. The objective is to find a route, denoted by $R_X$, which allows the server to serve all requests in $X$ and returns to the origin such that the completion time is minimized. 
We also let $d(p, p^\prime)$ represent the shortest distance between positions $p$ and $p^\prime$ in the metric space.

\medskip

\noindent{\bf Competitive analysis.}
Let the cost of an online algorithm, {ALG}, be denoted by $|\text{ALG}|$. We evaluate the performance of an online algorithm by comparing it with an offline optimum, denoted by $\costopt$, which is derived by the malicious adversary who knows all future information in advance. 
An online algorithm is $\alpha$-competitive if 
it satisfies $|\text{ALG}| \leq \alpha \costopt $ for any input instance, where 
$\alpha$ is called a competitive ratio of {ALG}. 

\medskip

\noindent{\bf Consistency and robustness.}
We refer to~\cite{kumar2018improving_ski} about the definition of the performance of a learning-augmented algorithm. 
One can incorporate a prediction cost function, in terms of 
a prediction error $\eta$, denoted by $c(\eta)$, and 
present the competitive ratio of a learning-augmented algorithm using the function $c(\eta)$. That is, 
an algorithm is $\gamma$-robust if $c(\eta)\leq \gamma$ for all the value of $\eta$. On the other hand, an algorithm is $\beta$-consistent if $c(0) = \beta$.

\medskip

\noindent{\bf Schedule-based algorithms.}
The concept of schedule-based algorithms was discussed by Brix~\cite{phdthesis2020} and Yu et al.~\cite{zhang}. 
A schedule-based algorithm operates by waiting at the origin for a certain amount of time and then computing a schedule, $R_{U_t}$, to serve the set of currently known and unserved requests $U_t$ at time $t$, treating 
the set
as an offline instance. Once the schedule $R_{U_t}$ begins, all new incoming requests are ignored until it is completed. In other words, 
each
schedule 
handles
only the requests available at its starting time.
Since we focus on schedule-based algorithms, the 
target
route $R_X$ 
actually comprises
a sequence of schedules, where each schedule serves a batch of available requests without interruption, and any new requests that arrive during its execution are deferred to future schedules.

Note that \ignore~\cite{ascheuer2000online} and \smartstart~\cite{ascheuer2000online} are schedule-based online algorithms for the OLTSP and 
the OLDARP.
That is, when the server is currently 
performing a schedule, if any new requests arrive, the server ignores them and continues its current schedule. The difference between them is that \ignore~is a zealous \cite{zealous2001} schedule-based algorithm, implying that the server performs a schedule immediately from the origin as long as any unserved requests are released. On the other hand, \smartstart~can wait for a moment at the origin and begin to serve unserved requests later; i.e., \smartstart~is non-zealous. Obviously, based on the merit of the waiting gadget, \smartstart~can achieve a better upper bound on competitive ratios.

The notion of \emph{online predictions} was first introduced by Shao et al.~\cite{shao2023online}. They defined that a prediction of the next online request can be made only upon the arrival of the current request. Specifically, they predicted the arrival time and position of the next online request whenever a new request arrives. 
In their setting, predictions are made incrementally over individual requests, rather than all at once.
In this study, we adapt the idea of online predictions from the request level to the schedule level. 
Instead of making a prediction for each incoming request, we allow a single prediction to be made only when the server returns to the origin and a schedule has just finished. 


\begin{definition}(Schedule-based algorithm with online predictions)\label{def:schedule_based_op}
For any schedule-based algorithm ALG with online predictions, it operates at time~$t$ as follows:  
    {ALG} makes a prediction about how to perform the next schedule only if the server is idle or waiting at the origin.
\end{definition}

Most prior studies usually predicted the entire set of all online requests or decided a predictive strategy in the beginning, and then conducted their algorithms in an offline manner based on the predictions. Thus we called the type of predictions \emph{offline predictions}. Apparently offline predictions can help achieve a better upper bound than online predictions if the predictions are accurate. However, large prediction errors, if any, may cause significant trouble when performing an online algorithm. In contrast, online predictions can provide a predictor with more flexibility to adaptively adjust its accuracy over time, which might be helpful in complicated scenarios.

\section{Lower Bound for Schedule-based Algorithms with Online Predictions}\label{sec:LB}

According to Definition~\ref{def:schedule_based_op}, we present a lower bound for any schedule-based algorithms with online predictions for 
the OLTSP. Since the OLDARP is a generalization of the OLTSP, the lower bound for the OLTSP is applicable to the OLDARP. 
In particular, 
we show the lower bound for the OLTSP on the real line, the one-dimensional case in the metric space, and it improves the previous result in Shao et al.~\cite{shao2023online} which proved a lower bound of 1.75 for any zealous algorithms with perfect online predictions.

\begin{theorem}\label{thm:LB}
    Let $A$ be a schedule-based algorithm with online predictions for the OLTSP on the line. The competitive ratio of $A$ is at least 2.
\end{theorem}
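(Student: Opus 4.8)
The plan is to adapt the classical adversary argument that establishes the lower bound of~$2$ for the (non-learning-augmented) OLTSP on the line, and to verify that it survives even when the online algorithm is handed a perfect prediction at the moment it decides how to run its next schedule. The key observation is that in the schedule-based model the algorithm is committed to a batch: once a schedule begins, it must finish before the server reacts to anything new. So the adversary should time a new request to arrive \emph{just after} the server commits to a schedule that leaves the origin, forcing the server either to waste the trip or to finish it and then pay again.

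Concretely, I would run the adversary on the line with origin~$0$. Release a first request at position~$1$ (say at time~$0$, or delay it to force the server to have the batch~$\{1\}$). The algorithm, being schedule-based, eventually commits to a schedule serving~$\{1\}$; let~$t$ be the time the server (having possibly waited) actually departs the origin on that schedule. By Definition~\ref{def:schedule_based_op} the only prediction the algorithm may use was made while it was idle/waiting at the origin, i.e.\ \emph{before} time~$t$; so whatever the prediction said, the adversary is still free to choose the future after time~$t$, and in particular can make the prediction ``perfect'' by committing to the continuation it is about to play. Case~1: if the server has already returned to the origin by some late time (i.e.\ it waited a very long time), the adversary simply presents the whole instance as essentially offline and the ratio is forced by the waiting; this is the $\theta$-type bound. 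Case~2: the interesting case is when the server departs the origin at time~$t$ to serve~$1$. Then at time~$t+\epsilon$ the adversary releases a request at position~$-1$. The server is in the middle of its schedule for~$\{1\}$; it cannot serve~$-1$ until the current schedule completes, i.e.\ until it has gone to~$1$ and come back to~$0$, which happens no earlier than time~$t+2$. Only then does it start a new schedule to go to~$-1$ and back, costing another~$2$, so the algorithm's completion time is at least~$t+4$ (and at least~$\max\{t,\text{release time}\}+2$ from the waiting-lower-bound side). Meanwhile $\opt$ can serve both requests: it departs at time~$0$ (or at the release time of the later request, which the adversary keeps small relative to~$t$), sweeps to~$-1$, then to~$1$, then back to~$0$, for a total of about~$4$ (more precisely, release time plus a constant). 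Balancing the two cases over the algorithm's choice of~$t$ pins the ratio at~$2$.

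The steps in order: (i)~state the adversary's instance as a decision tree depending on the algorithm's departure time~$t$ from the origin; (ii)~argue that because any online prediction is issued only while the server is at the origin (Definition~\ref{def:schedule_based_op}), the adversary may fix the continuation after~$t$ and declare it the predicted one, so ``perfect prediction'' gives the algorithm no extra power against this adversary; (iii)~in the branch where the server waits too long, lower-bound $|A|$ by the wait and upper-bound $\opt$ to get ratio~$\ge 2 - o(1)$; (iv)~in the branch where the server departs at~$t$, use the schedule-based commitment to show $|A| \ge t + 4$ while $\opt \le 2 + (\text{small release time}) + o(1)$, again forcing $\ge 2 - o(1)$; (v)~take the appropriate limits (release times $\to$ their extreme values, $\epsilon \to 0$) to remove the lower-order terms and conclude the competitive ratio is at least~$2$.

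The main obstacle I anticipate is step~(ii): making rigorous the claim that a \emph{perfect} online prediction cannot help. One has to be careful about what ``perfect'' means here --- presumably the prediction describes the next schedule or the next request(s), and is declared correct ex post. The cleanest way is to note that the adversary's strategy is \emph{oblivious to the prediction in the relevant window}: the adversary only needs to know~$t$ (a quantity the algorithm reveals by acting), and after fixing the post-$t$ continuation the adversary can retroactively set the prediction equal to that continuation, so it is perfect and yet the algorithm has already behaved as analyzed. A secondary nuisance is the bookkeeping of additive constants versus the multiplicative ratio: the textbook line lower bound for OLTSP is exactly~$2$ with no slack, so I must choose the release times and the scaling so that the additive terms vanish in the limit rather than leaving a ratio like~$2-\delta$; this is routine but needs the instance to be scalable (place the far request at distance~$L$ and let $L\to\infty$, or equivalently delay things appropriately).
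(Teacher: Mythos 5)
There is a genuine gap, and it is in the heart of the construction, not in the bookkeeping. Your adversary places the second request at $-1$ (opposite side of the origin), released just after the server departs at time $t$. Then the algorithm indeed pays at least $t+4$, but the optimum must also visit both $+1$ and $-1$ and return, so its cost is about $4$ as well (your own estimate). The ratio is therefore $(t+4)/4$, which exceeds $2$ only when $t\geq 4$; against a schedule-based server that simply starts its first schedule immediately ($t\approx 0$) your instance yields a ratio close to $1$, and neither of your two branches penalizes a small $t$ (branch (iii) only punishes long waiting). So the claimed ``balancing'' does not pin the ratio at $2$; it pins nothing. The scaling remark does not rescue this: pushing the second request to $-L$ inflates the optimum by the same $2L$ and drives the ratio toward $1$, and shrinking it toward the origin shrinks the algorithm's second schedule equally. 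Relatedly, the premise that the ``classical'' line adversary already gives $2$ is off: for unrestricted online algorithms on the line the known lower bound is well below $2$; the factor $2$ here comes specifically from the schedule-based commitment, and it has to be extracted by a more careful placement of the adversarial request.

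The missing idea, which is exactly what the paper's proof supplies, is to make the late request free for the optimum but expensive for the committed server: release it on the \emph{same} side, on the anti-diagonal position $1-s$ with release time $1+s$, choosing $s$ so that the request appears just after the algorithm's start time $t_1$. The optimum walks to $1$ by time $1$ and sweeps back, collecting every such request exactly when it is released, so its cost stays $2$; the schedule-based server has committed, ignores the request, and must perform a second round trip of length almost $2$, giving cost about $4$. To cover every possible start time the paper pre-announces a dense staircase $x_i=(1-i\varepsilon,\,1+i\varepsilon)$ that is perfectly predicted (this also makes the ``perfect online prediction'' claim concrete within Definition~\ref{def:schedule_based_op}, rather than the retroactive-prediction argument you sketch in step (ii)), handles starts before time $1$ by re-releasing a request at position $1$ at time $1$, and handles starts after time $2$ by the waiting cost alone. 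Your proposal would need this same-side, zero-marginal-cost-for-OPT placement (or an equivalent device) before any case analysis can force the ratio $2$.
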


\begin{proof}
To show the lower bound, by Definition~\ref{def:schedule_based_op}, when the server is at the origin and the current schedule is finished, $A$ can make a prediction over future requests or their information to build the next schedule. Then it can decide when the server performs the next schedule.
Note that $A$ is a schedule-based algorithm with online predictions. Thus, if there is a new incoming request that is not in the schedule, the server will ignore it.  

Initially, assume a request $x_0 = (1, 0)$ 
arrives at position 1 at time 0.
Then, there are some incoming requests, where $x_1=(1-\varepsilon,1+\varepsilon)$, $x_2=(1-2\varepsilon,1+2\varepsilon), \ldots$, $x_k=(1-k\varepsilon,1+k\varepsilon)$, satisfying $0<\varepsilon=\frac{1}{k}$, for any integer $k$.
At time 0, $A$ predicts $k$ requests, from $\hat{x}_1$ to $\hat{x}_k$, and let a schedule comprise $x_0$ and $\hat{x}_1$ to $\hat{x}_k$. 
Suppose $x_1, \ldots$, $x_k$ are all correctly predicted by $A$, i.e., 
$ \hat{x}_i = x_i=(a_i,t_i)$,
$1 \le i \le k$. We conduct the analysis depending on the start time of the first schedule by $A$, denoted by $t_1$. 

We examine each case depending on the position of the server at time $t_1$. Note that the blue line represents the walk of OPT, while the orange line represents the walk of $A$. We denote the initial request $x_0=(1,0)$ as a black point, actual requests $x_1, \ldots$, $x_k$ as dark green points and predicted requests $\hat{x}_1, \ldots$, $\hat{x}_k$ as light green points.

\begin{itemize}
    \item {Case 1: $0\leq t_1<1$.}\
    In this scenario, the server performs the first schedule before time~1. 
    Then, an actual request $x_{k+1}=(1,1)$ arrives at time 1. According to Definition~\ref{def:schedule_based_op}, the server ignores $x_{k+1}$ in the first schedule. That is, the second schedule needs to be built to serve $x_{k+1}$.
    The completion time of $A$ is $|A|\geq t_1+2+2\geq 4$.
    However, the optimal offline adversary i.e., OPT, can serve all of these requests when they arrive, and complete them at time $2$. 
    Thus, we can obtain $|A| \geq 2 \costopt$.
        \begin{figure}[htb] 
            \centering 
            \includegraphics[width=0.5\textwidth]{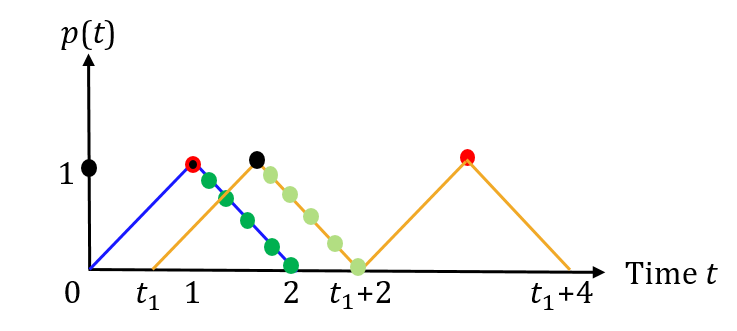}
            \caption{Case 1: $0\leq t_1<1$ and the red point represents $x_{k+1}=(1,1)$}
            \label{LB_case1}
        \end{figure}
        
    \item {Case 2: $1\leq t_1< 2$.}
    For $0\leq h < k$, $h \in \mathcal{N}$, we assume $1+h\varepsilon\leq t_1<1+h\varepsilon+ \frac{\varepsilon}{2}$. An actual request $x_{k+1}=(1-h\varepsilon-\frac{\varepsilon}{2},1+h\varepsilon+\frac{\varepsilon}{2})$ arrives. Note that $h\varepsilon < k\varepsilon =1$ since $h<k$.
    Similar to Case 1, the server ignores $x_{k+1}$ in the first schedule and the second schedule needs to be constructed to serve $x_{k+1}$. The duration of the second schedule is at least $2(1-h\varepsilon-\frac{\varepsilon}{2})$. As $t_1\geq 1+ h \varepsilon$, the completion time of $A$ is $|A|\geq t_1+2+2(1-h\varepsilon-\frac{\varepsilon}{2})\geq 5-h\varepsilon-\varepsilon$. Thus, we obtain $|A| > 4- \varepsilon$; meanwhile, OPT is still 2. Finally, we also conclude $|A| \geq 2 \costopt$ when $k \to \infty$. 
        
        \begin{figure}[htb] 
            \centering 
            \includegraphics[width=0.55\textwidth]{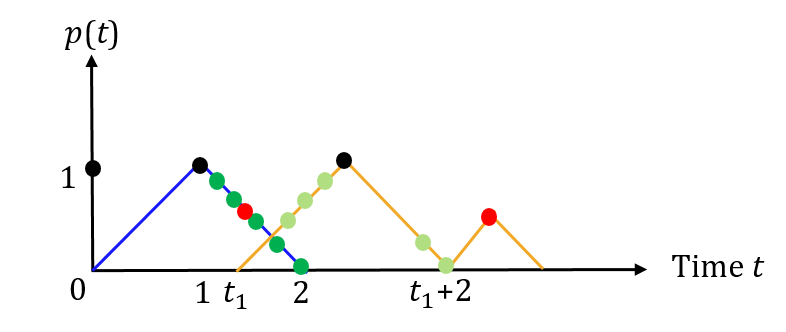}
            \caption{Case 2: $1\leq t_1< 2$ and the red point represents $x_{k+1}=(1-h\varepsilon-\frac{\varepsilon}{2},1+h\varepsilon+\frac{\varepsilon}{2})$}
            \label{LB_case2}
        \end{figure}
    \item {Case 3: $ t_1\geq 2$.}
    In this case, let an actual request $x_{k+1}$ be $(1-\delta,1+\delta)$ where $ 0\leq \delta \leq 1$, and $x_{k+1} \notin \{ x_1,x_2,...,x_k\}$.
    The server can serve all the requests in the first schedule if $t_1\geq 2$. Though, OPT can serve all the requests at time 2, while the completion time of $A$ is $|A| \geq t_1 + 2 \geq 4$. Thus, we also can obtain $|A| \geq 2 \costopt$. 
        \begin{figure}[htb] 
            \centering 
            \includegraphics[width=0.55\textwidth]{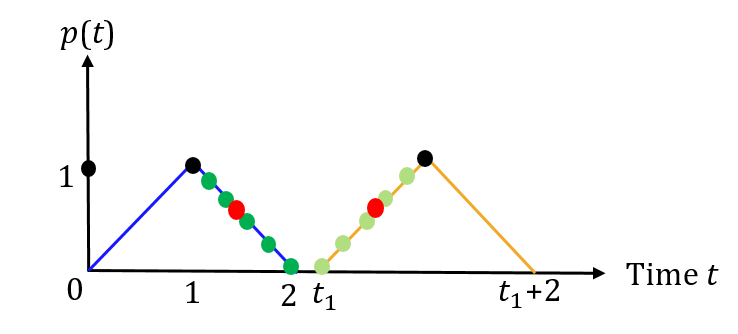}
            \caption{Case 3: $ t_1\geq 2$ and the red point represents $x_{k+1} = (1-\delta,1+\delta)$}
            \label{LB_case3}
        \end{figure}
\end{itemize}
\end{proof}

\section{Algorithm}\label{sec:alg}
We first recall the \smartstart~algorithm, which was proposed 
by Ascheuer et al.~\cite{ascheuer2000online}, reproduced here as Algorithm 1 for self-containedness.

\begin{table}[htb]\caption{Algorithm 1 \smartstart~\cite{ascheuer2000online}}
\label{alg:smartstart}
\begin{center}
\begin{tabular}{l}
\hline
\noalign{\smallskip}
    Given a waiting scaling parameter $\theta$ and an approximation ratio of a schedule \\ denoted by $\approxNT$, 
    at any time $t$, the server of \smartstart~enters one of the following states:\\ 
\noalign{\smallskip}
\hline
\noalign{\smallskip}
\begin{minipage}{4.5in}
    \vskip 2pt
    \begin{enumerate}
      \item[1.] Idle: The server is idle at the origin as there are no unserved requests.\\ 
      \item[2.] Waiting: The server is at the origin. When there exists a non-empty set of unserved requests $U_t$, the server is not idle. Let $R_{U_t}$ denote the schedule constructed from $U_t$, and let $|R_{U_t}|$ denote its length. If the server cannot finish $R_{U_t}$ by time $\theta t$, that is, if $t + |R_{U_t}| > \theta t$, or equivalently, $t < \frac{|R_{U_t}|}{\theta - 1}$, the server remains waiting at the origin. Otherwise, if $t + |R_{U_t}| \leq \theta t$ (i.e., $t \geq \frac{|R_{U_t}|}{\theta - 1}$), the server starts performing the schedule $R_{U_t}$.\\
      \item [3.] Working: 
      The server is currently performing the schedule $R_{U_{t_p}}$, where $R_{U_{t_p}}$ denotes the preceding schedule, as it was constructed before the current time. That is, $t_p + |R_{U_{t_p}}| \leq \theta t_p$.
      If any new requests arrive during the period of $R_{U_{t_p}}$, i.e., $[t_p,, t_p + |R_{U_{t_p}}|]$, the server ignores them and continues with the current schedule.\\
    \end{enumerate}
    \vskip 2pt
\end{minipage}
\vspace{-5pt}
\\
\hline
\end{tabular}
\end{center}
\end{table}

The server has three states: (1) The \emph{working} state indicates that the server is currently performing a schedule. In this state, if any new requests appear, 
similar to the behavior of \ignore~described in~\cite{ascheuer2000online}, 
the server ignores them and continues with the current schedule. (2) The \emph{idle} state indicates that the server is idle at the origin and there are no unserved requests at that time. (3) The \emph{waiting} state indicates that 
at the current time $t$, 
the server is waiting at the origin 
until a \emph{signal} to start processing 
unserved requests that just arrive, denoted by a set $U_t$, at time $t$. 
To determine the 
signal to initiate work,~\smartstart~incorporates a waiting parameter $\theta$.
Specifically,
when the server is at the origin and there exists a set of unserved requests $U_t$ at time $t$,~\smartstart~determines the state of the server. 
If the next schedule $R_{U_t}$ formed by the unserved requests in $U_t$ can be finished by time $\theta t$ at the latest, the server enters the working state. Otherwise, it 
remains waiting. 


\begin{theorem}[\cite{ascheuer2000online}, Theorem 6.]\label{thm:smartstart}
Given a waiting scaling parameter $\theta$ and an approximation ratio of a schedule denoted by $\approxNT$,
it holds that for any $\theta \geq \approxNT$ ($\theta > 1$ for sure),~\smartstart~is $\alpha$-competitive with 
    \begin{center}
         $ \alpha = \max \{ \theta, \approxNT (1+\frac{1}{\theta-1} ), \frac{\theta}{2}+\approxNT\}$. 
    \end{center}
\end{theorem}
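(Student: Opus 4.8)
The plan is to charge the whole completion time to the \emph{last} schedule. Write $S$ for the final schedule run by \smartstart, let $t_S$ be the time it is started and $\ell_S\defeq|R_{U_{t_S}}|$ its length, so that $\cost=t_S+\ell_S$. Two facts will be used throughout: (i) since $S$ was started, the start test holds, i.e.\ $t_S+\ell_S\le\theta t_S$, equivalently $\ell_S\le(\theta-1)t_S$; and (ii) because the offline optimum must serve every request of $U_{t_S}$, the $\approxNT$-approximation used to build the schedule gives $\ell_S\le\approxNT\cdot L^\ast(U_{t_S})\le\approxNT\,\opt$, where $L^\ast(U_{t_S})$ denotes the length of an optimal (offline) tour that serves $U_{t_S}$ and returns to the origin. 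I then split on what the server was doing just before $t_S$.

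\textbf{Case A: the server was idle or waiting right before $t_S$.} I would first observe that a schedule can never become ``startable'' at the instant a new request arrives, because adding a request only lengthens $R_{U_t}$ and hence raises the threshold $\tfrac{|R_{U_t}|}{\theta-1}$; so either $S$ is started exactly when the threshold is reached, giving $t_S=\tfrac{\ell_S}{\theta-1}$ and therefore $\cost=\tfrac{\theta}{\theta-1}\ell_S\le\approxNT\!\left(1+\tfrac{1}{\theta-1}\right)\opt$, or $S$ is started the moment $U$ first becomes nonempty after an idle period, in which case $t_S$ equals the release time of a request in $U_{t_S}$, so $\opt\ge t_S$ and $\cost\le(1+\approxNT)\opt$; the latter quantity is dominated by $\max\{\theta,\approxNT(1+\tfrac{1}{\theta-1})\}$ whenever $\theta\ge\approxNT>1$, so it never governs the bound.

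\textbf{Case B: the server finished the previous schedule $S'$ exactly at $t_S$ and launched $S$ immediately}, so $t_S=t_{S'}+\ell_{S'}$, where $t_{S'}$ and $\ell_{S'}$ are the start time and length of $S'$; the start test for $S'$ gives $\ell_{S'}\le(\theta-1)t_{S'}$, hence $\cost\le\theta t_{S'}+\ell_S$. The structural heart of this case is that every request served by $S$ was released after $t_{S'}$ (otherwise it would have belonged to $U_{t_{S'}}$ and been served by $S'$); consequently $\opt>t_{S'}$, and moreover the optimum must visit every position of $U_{t_S}$ during the interval $[t_{S'},\opt]$. Turning the optimum's sub-trajectory on this interval into a closed tour for $U_{t_S}$ (prepend a move from the origin to the optimum's position at time $t_{S'}$, then follow the optimum to the origin) shows $L^\ast(U_{t_S})\le\min\{t_{S'},\,\opt-t_{S'}\}+(\opt-t_{S'})$, and thus $\ell_S\le\approxNT\bigl(\min\{t_{S'},\opt-t_{S'}\}+(\opt-t_{S'})\bigr)$. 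Substituting this into $\cost\le\theta t_{S'}+\ell_S$ and distinguishing $t_{S'}\le\opt/2$ from $t_{S'}>\opt/2$, and within the latter the sign of $\theta-2\approxNT$, gives $\cost\le\max\{\theta,\ \tfrac{\theta}{2}+\approxNT\}\opt$ by an elementary optimization of a linear function of $t_{S'}$ over $(0,\opt)$. Combining Cases A and B yields $\cost\le\max\{\theta,\ \approxNT(1+\tfrac{1}{\theta-1}),\ \tfrac{\theta}{2}+\approxNT\}\opt$, as claimed.

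The step I expect to be the main obstacle is the bound on $\ell_S$ in Case~B: one must argue precisely that, because the optimum has completion time $\opt$ and is forced to serve requests released only after $t_{S'}$, it can reach all of $U_{t_S}$ only within the last $\opt-t_{S'}$ units of time and hence ``close to the origin'', and one must convert the optimum's open sub-walk into a bona fide closed tour whose length is bounded by $(\opt-t_{S'})$ plus the distance back from the optimum's position at time $t_{S'}$, which is itself at most $\min\{t_{S'},\opt-t_{S'}\}$. Everything after that---the two nested case splits and checking that the Case-A side quantity $(1+\approxNT)\opt$ is dominated---is routine linear arithmetic under the hypotheses $\theta\ge\approxNT>1$.
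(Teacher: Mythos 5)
Your argument is correct, and it is essentially the classical \smartstart\ analysis that this theorem is quoting: the paper itself gives no proof here (it cites Ascheuer et al.), but it reuses the same structure in Lemma~\ref{lem:waiting} and Lemma~\ref{lem:working}. The one place you genuinely deviate is the key estimate in your Case B: you bound the last schedule by moving from the origin to OPT's position at time $t_{S'}$ and then shadowing OPT, giving $\ell_S\le\approxNT\left(\min\{t_{S'},\opt-t_{S'}\}+(\opt-t_{S'})\right)$, whereas the cited proof (reproduced in the paper as Lemma~\ref{lem:finalschedule}) anchors the detour at the first request of $U_{t_S}$ served by OPT, giving $\ell_S\le\approxNT\left(\opt-t_{S'}+d(o,a_k)\right)$ together with $d(o,a_k)\le\opt/2$ and $t_{S'}+d(o,a_k)\le\opt$; the two constructions are interchangeable, lead to the same linear optimization in $t_{S'}$ (resp.\ $d(o,a_k)$), and yield the same three terms $\theta$, $\approxNT(1+\frac{1}{\theta-1})$, $\frac{\theta}{2}+\approxNT$, and your version has the mild advantage of not naming a specific request. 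One caveat in your Case A: the claim that the start test can never become satisfied exactly when a new request arrives rests on monotonicity of the \emph{approximate} schedule length under adding requests, which a generic $\approxNT$-approximation (e.g.\ Christofides) need not satisfy. The gap is harmless: in that event the start time $t_S$ is again the release time of some request, so $t_S\le\opt$ and the very same $(1+\approxNT)\opt$ (indeed $\theta\,\opt$, since the start test holds at $t_S$) bound applies; it is cleaner to phrase the Case-A dichotomy as ``$S$ starts either at the waiting threshold or at a release time'' and drop the monotonicity claim altogether.
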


\medskip

~\smartstart~achieves the best competitive ratio of $\frac{1}{4}(4\approxNT+1+\sqrt{1+8\approxNT})$ when the choice of $\theta$ is 
set to be $\frac{1}{2}(1+\sqrt{1+8\approxNT})$~\cite{ascheuer2000online}. 
Note that for the Online Traveling Salesman Problem (OLTSP), where Christofides' algorithm yields a polynomial time approximation algorithm with $\approxNT = 1.5$,~\smartstart~achieves 
the currently best competitive ratio of $\frac{7+\sqrt{13}}{4} \approx 2.6514$ 
in polynomial time, 
when $\theta = \frac{1+\sqrt{13}}{2} \approx 2.3028$.

\longdelete{
\begin{algorithm}[htb]
    \begin{algorithmic}[1] 
    \caption{\textsc{Smartstart~\cite{ascheuer2000online}}}\label{alg:smartstart}
    \Require A set of unserved requests $U_t$ at time ${t}$,  a given $\theta$ with $\theta \geq \approxNT$
    \While {$U_t \neq \emptyset$}
        \If {the server is at the origin $o$}
            \State compute an approximate schedule $R_{U_t}$;
        \EndIf
        \While {$t<\frac{|R_{U_t}|}{\theta-1}$}
            \State wait at the origin $o$;
        \EndWhile
        \State $R_{U_t}\gets$ the schedule formed by the unserved requests; perform $R_{U_t}$;
    \EndWhile
    \end{algorithmic}
\end{algorithm}
}



In the following, 
we present a \emph{learning-augmented algorithm} 
which incorporates 
online predictions into~\smartstart,
making a simple binary decision for each schedule,
denoted by~\ouralg.
We introduce 
a new
\emph{predicting} state into \smartstart~when the server is located at the origin, considering a set of unserved requests $U_t$. 
While a natural approach might be 
predicting
the requests in $U_t$,
our method does not require forecasting specific requests. 
Instead, we 
simply make
a binary decision for performing the next schedule.
Precisely,
an approximate schedule denoted by $R_{U_t}$, which comprises the unserved requests in $U_t$, is first built, and the start time of $R_{U_t}$ has to be determined by~\smartstart, denoted by $t^\prime$. 
Simultaneously, a binary prediction is made to decide how to perform the next schedule. Specifically, the predictor determines whether the server should wait longer at the origin, starting later than $t^\prime$, in anticipation of future requests, or start 
early
with the currently known requests and perform the schedule earlier than $t^\prime$. Based on this 
prediction,
one of the two decisions is selected, corresponding to either the late-start gadget or the early-start gadget.
If \ouralg~selects the late-start gadget, it determines whether the completion time of the 
schedule $R_{U_t}$, i.e., $t+{|R_{U_t}|}$, is greater than $\lambda\theta t$ or not, where $\lambda$ is a parameter satisfying $\lambda \in (\frac{1}{\theta},1]$. 
Once the 
schedule $R_{U_t}$ can be finished by time $\lambda \theta t$ at the latest, the server enters the working state. Otherwise, it waits at the origin. On the other hand, if \ouralg~selects the early-start gadget, it examines the relationship between $t+{|R_{U_t}|}$ and  $\frac{\theta}{\lambda} t$. 
See Algorithm 2 for the details.

\longdelete{
\begin{table}[hp]\caption{Algorithm 2 \smartstart~with online predictions (\ouralg)}
\label{alg:ouralg}
\begin{center}
\begin{tabular}{l}
\hline
\noalign{\smallskip}
    Given a waiting scaling parameter $\theta$, an approximation ratio of a schedule, \\ denoted by $\approxNT$, and the confidence level $\lambda \in (\frac{1}{\theta},1]$, 
    at any time $t$, the server \\ of \ouralg~enters one of the following states:\\ 
\noalign{\smallskip}
\hline
\noalign{\smallskip}
\begin{minipage}{4in}
    \vskip 2pt
    \begin{enumerate}
      \item[1.] Idle: The server is idle at the origin as there are no unserved requests.\\ 
      \item[2.] Predicting: The server is at the origin. When there exists a non-empty set of unserved requests $U_t$, the server is not idle. Let $R_{U_t}$ denote the schedule constructed from $U_t$, and let $t^\prime$ denote the start time of $R_{U_t}$ as determined by~\smartstart. Simultaneously, a binary prediction is made to decide whether the server should wait longer at the origin and begin after $t^\prime$, in anticipation of additional future requests, or start earlier than $t^\prime$ with the currently known requests. Accordingly, if the predictor recommends starting later than $t^\prime$, \ouralg~selects the 1st gadget (late-start); otherwise, if starting earlier is advised, \ouralg~selects the 2nd gadget (early-start).\\
      \item [3.] Waiting: According to the prediction state, \ouralg~enters one of the following gadgets accordingly:\\
          \begin{enumerate}
                \item [(i)] 1st gadget (late-start): 
                If the server cannot finish $R_{U_t}$ by time $\lambda \theta t $, that is, if $t + |R_{U_t}|>\lambda \theta t $, or equivalently, $t < \frac{|R_{U_t}|}{\lambda \theta-1}$, the server remains waiting at the origin. Otherwise, if $t + |R_{U_t}| \leq \lambda \theta t $ (i.e., $t \geq \frac{|R_{U_t}|}{\lambda \theta-1}$), the server starts performing the schedule $R_{U_t}$.\\
                \item [(ii)] 2nd gadget (early-start): 
                If the server cannot finish $R_{U_t}$ by time $\frac{\theta}{\lambda}t $, that is, if $t + |R_{U_t}|>\frac{\theta}{\lambda} t $, or equivalently, $t < \frac{|R_{U_t}|}{\frac{\theta}{\lambda}-1}$, the server remains waiting at the origin. Otherwise, if $t + |R_{U_t}| \leq \frac{\theta}{\lambda} t $ (i.e., $t \geq \frac{|R_{U_t}|}{\frac{\theta}{\lambda}-1}$), the server starts performing the schedule $R_{U_t}$.\\
            \end{enumerate}
      \item [4. ] Working: 
      The server is currently performing the schedule $R_{U_{t_p}}$, where $R_{U_{t_p}}$ denotes the preceding schedule, as it was constructed before the current time. 
      That is, it satisfies either $t_p + |R_{U_{t_p}}| \leq \lambda \theta t_p$ or $t_p + |R_{U_{t_p}}| \leq \frac{\theta}{\lambda}t_p$, depending on the gadget under which the schedule was initiated.
      If any new requests arrive during the interval $[ t_p , t_p + |R_{U_{t_p}}| ]$, the server ignores them and continues with the current schedule.\\
    \end{enumerate}
    \vskip 2pt
\end{minipage}
\vspace{-5pt}
\\
\hline
\end{tabular}
\end{center}
\end{table}
}
\longdelete{
We compute an approximate schedule $R_{U_t}$ for $U_t$ at time $t$ and incorporate a predicted start time, denoted by $\hat t$, for the server to perform $R_{U_t}$.
We then compare this predicted start time $\hat t$ with the start time 
decided 
by~\smartstart, denoted by $t^\prime$.
If $\hat t$ is not less than $t^\prime$,~\ouralg~proceeds to a new gadget to 
determine 
whether the finishing time of the current schedule, i.e. $t+{|R_{U_t}|}$, is greater than $\lambda\theta t$ or not, where $\lambda$ is a parameter satisfying $\lambda \in (\frac{1}{\theta},1]$. On the other hand, if $\hat t$ is less than $t^\prime$,~\ouralg~proceeds to another gadget to determine the relationship between $t+{|R_{U_t}|}$ and  $\frac{\theta}{\lambda} t$.
}

\longdelete{
\begin{algorithm}[htb]
\caption{\textsc{ALG}}\label{alg:ouralg}
\begin{algorithmic}[1]
\Require A set of released unserved requests $U_t$ at time ${t}$, a given $\theta$ with $\theta \geq \approxNT$, and the confidence level $\lambda \in (\frac{1}{\theta},1]$
\While{$U_t \neq \emptyset$} 
    \If{the server is at the origin $o$}  
        \State compute an approximate schedule $R_{U_t}$;
        \Statex \quad \quad \quad $t^\prime\gets$ the start time of $R_{U_t}$ determined by \smartstart;
        \Statex \quad \quad \quad $I_t\gets$ the information of a prediction over future requests;
    \EndIf
    \State the server receives the information from $I_t$;
    \If{$I_t$ advises the server to delay the schedule by $t^\prime$}
        \State G = 1
    \Else{}
        \State G = 2
    \EndIf
    \If{G = 1}
    \Comment{1st gadget (late-start):~\ouralg~starts later}
        \While{$t<\frac{|R_{U_t}|}{\lambda\theta-1}$}
            \State  wait at the origin $o$;
        \EndWhile
        \State  $R_{U_t}\gets$ the schedule formed by the unserved requests; perform $R_{U_t}$;
    \EndIf
    \If{G = 2}
    \Comment{2nd gadget (early-start):~\ouralg~starts earlier}
        \While{$t<\frac{|R_{U_t}|}{\frac{\theta}{\lambda}-1}$} 
            \State  wait at the origin $o$;
        \EndWhile
        \State  $R_{U_t}\gets$ the schedule formed by the unserved requests; perform $R_{U_t}$;
    \EndIf
\EndWhile
\end{algorithmic}
\end{algorithm}
}

\begin{theorem}\label{thm:ouralg}
Given a waiting scaling parameter $\theta$ and an approximation ratio $\approxNT$ of a schedule,  
the competitive ratio of~\ouralg~is 
$\max\{\lambda \theta+\frac{\lambda \theta \varepsilon_f}{\costopt},\approxNT (1+ \frac{1}{\frac{\theta}{\lambda}-1} ) + \frac{\varepsilon_f}{\costopt}, (\frac{\lambda\theta}{2}+\approxNT) \}$-consistent
and $\max\{ \frac{\theta}{\lambda},\approxNT (1+ \frac{1}{\lambda\theta-1} ),  (\frac{\theta}{2\lambda}+\approxNT)\}$-robust, 
where $\lambda \in (\frac{1}{\theta},1]$ 
and $\varepsilon_f = | \hat{t}_f - t_f|$.
\end{theorem}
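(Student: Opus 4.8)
The plan is to mirror the case analysis behind Theorem~\ref{thm:smartstart} (the competitive analysis of~\smartstart~in~\cite{ascheuer2000online}), modified by the observation that every schedule of~\ouralg~is launched by one of two \emph{parametrized} versions of~\smartstart's waiting rule: the late-start gadget, which is that rule with effective parameter $g=\lambda\theta$ (start the schedule once $t+|R_{U_t}|\le\lambda\theta t$), and the early-start gadget, which is that rule with effective parameter $g=\theta/\lambda$. Since $\lambda\in(\frac{1}{\theta},1]$ we have $1<\lambda\theta\le\theta\le\theta/\lambda$, so both rules are well defined and their waiting periods terminate (using $|R_{U_t}|\le\approxNT\costopt$); moreover, exactly as in~\smartstart, any schedule $S_i$ that the server starts at time $t_i$ with length $\ell_i$ satisfies $t_i+\ell_i\le g_i t_i$ for its own parameter $g_i\in\{\lambda\theta,\theta/\lambda\}$, while $\ell_i\le\approxNT\costopt$ and $\costopt$ is at least every request's release time.

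First I would reproduce the structural reduction of~\cite{ascheuer2000online}: the cost $\costalg=t_m+\ell_m$ of the last schedule $S_m$ is bounded using only $S_m$ and, in one situation, its predecessor, according to three cases. In case~(a), $S_m$ is started essentially at the arrival time of one of its own requests, so $t_m\le\costopt$ and $\costalg\le g_m t_m\le g_m\costopt$. In case~(b), $S_m$ is started only after its waiting period expires, so $t_m=\frac{\ell_m}{g_m-1}$ and $\costalg=\frac{g_m}{g_m-1}\ell_m\le\approxNT\left(1+\frac{1}{g_m-1}\right)\costopt$. In case~(c), $S_m$ starts the instant the server returns from the previous schedule, which is the two-schedule case and, via the same back-and-forth lower bound on $\costopt$ as in~\cite{ascheuer2000online}, gives $\costalg\le\left(\frac{g_m}{2}+\approxNT\right)\costopt$. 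For the robustness bound the predictions are adversarial, so in each case I take the worst admissible $g_m$: since $x\mapsto x$ and $x\mapsto x/2$ are increasing while $x\mapsto x/(x-1)$ is decreasing on $(1,\infty)$, and $\lambda\theta\le\theta/\lambda$, case~(a) is worst at $g_m=\theta/\lambda$, case~(b) at $g_m=\lambda\theta$, and case~(c) at $g_m=\theta/\lambda$. Combining, $\costalg\le\max\left\{\frac{\theta}{\lambda},\ \approxNT\left(1+\frac{1}{\lambda\theta-1}\right),\ \frac{\theta}{2\lambda}+\approxNT\right\}\costopt$, the claimed robustness bound.

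For consistency I would argue that when the predicted start times (and hence the binary choices) are accurate, the gadget selected for the last schedule is the one a clairvoyant would use: the late-start gadget exactly when requests arriving during the wait are worth absorbing into $S_m$ (keeping $S_m$ the last schedule), and the early-start gadget otherwise. Re-running (a)--(c) with the gadget forced to the \emph{favourable} side flips which of $\lambda\theta,\theta/\lambda$ is binding: case~(a) contributes $\lambda\theta\costopt$, case~(b) contributes $\approxNT\left(1+\frac{1}{\theta/\lambda-1}\right)\costopt$, and case~(c) contributes $\left(\frac{\lambda\theta}{2}+\approxNT\right)\costopt$. It then remains to absorb the discrepancy between the start time $\hat t_f$ that the (possibly slightly inaccurate) prediction induces for the last schedule and the value $t_f$ the analysis targets; expressing $t_m$ through $\hat t_f$ and $\varepsilon_f=|\hat t_f-t_f|$ makes $\varepsilon_f$ enter additively, producing exactly the $\frac{\lambda\theta\varepsilon_f}{\costopt}$ term in case~(a) and the $\frac{\varepsilon_f}{\costopt}$ term in case~(b) (case~(c) never references $t_f$). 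Taking the maximum over the three cases yields the stated consistency bound, which at $\varepsilon_f=0$ collapses to $\max\{\lambda\theta,\ \approxNT(1+\frac{1}{\theta/\lambda-1}),\ \frac{\lambda\theta}{2}+\approxNT\}$.

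I expect the real work to lie in two places. The first is the two-schedule case~(c) once the gadgets are interleaved: $S_{m-1}$ and $S_m$ may carry different parameters, so the back-and-forth lower bound on $\costopt$ must be redone for the combinations of $(g_{m-1},g_m)$ and then maximized, checking that the factor $\frac{1}{2}$ survives and that the claimed worst combination is indeed worst. The second is the learning-augmented part proper: pinning down what a ``correct'' prediction means, showing that under it the last schedule is never forced into the unfavourable gadget in cases~(a) and~(b) — which is exactly what separates consistency from robustness — and verifying that the prediction inaccuracy enters only through the single quantity $\varepsilon_f$ attached to the last schedule, with no accumulation over earlier schedules.
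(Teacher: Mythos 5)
Your proposal follows essentially the same route as the paper: it parametrizes the \smartstart{} analysis by the gadget parameter $g\in\{\lambda\theta,\theta/\lambda\}$, reuses the three canonical bounds (start-at-arrival, start-when-waiting-expires, and the two-schedule case via the \ignore-type bound of Ascheuer et al., the paper's Lemma~\ref{lem:finalschedule}), takes the worst parameter per case for robustness, and the favorable parameter plus an additive $\varepsilon_f$ term for consistency — exactly the structure of Lemmas~\ref{lem:waiting} and~\ref{lem:working}. The one complication you anticipate (mixed gadgets in the chained two-schedule case) dissolves in the paper because only the preceding schedule's parameter enters that bound, since the last schedule starts immediately upon the server's return.
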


To simplify the description, 
we use new notation, 
$\max \{ A, B, C\}$,
to represent the competitive ratio $\alpha$ of \smartstart. 
Precisely, 
as shown in Table~\ref{tab:ratio_smartstart}, $A$ represents $\theta$, $B$ represents $ \approxNT (1+\frac{1}{\theta-1} )$, and $C$ represents $\frac{\theta}{2}+\approxNT$.

\begin{table}[htb]
\centering
\caption{New notation 
for simplifying each item in 
the ratio of \smartstart}\label{tab:ratio_smartstart}
\begin{tabular}{|c|c|c|c|}
\hline 
Items 
& $\theta$ & $ \approxNT (1+\frac{1}{\theta-1} )$ & $\frac{\theta}{2}+\approxNT$ \\
\hline 
New notation & $A$ & $B$ & $C$ \\
\hline 
\end{tabular}
\end{table}

We also simplify the result of~\ouralg, as shown in Table~\ref{tab:ratio_alg}.
We denote the consistency 
as $\max\{ \lambda A +\frac{\lambda \theta \varepsilon_f}{\costopt}, B_{\lambda} + \frac{\varepsilon_f}{\costopt}, C_\lambda \}$,
where 
$B_{\lambda}$ represents $\approxNT (1+ \frac{1}{\frac{\theta}{\lambda}-1} )$ and $C_\lambda$ represents $(\frac{\lambda\theta}{2}+\approxNT)$.
Similarly, 
we denote the robustness 
as $\max = \{ \frac{A}{\lambda}, B_{1/\lambda}, C_{1/\lambda} \}$, 
where 
$B_{1/\lambda}$ represents $\approxNT (1+ \frac{1}{\lambda\theta-1} )$ and $C_{1/\lambda}$ represents $(\frac{\theta}{2\lambda}+\approxNT)$.
The simplification 
helps clarify the trade-off between consistency and robustness.
\begin{table}[hp] 
\centering
\caption{New notation 
for
simplifying 
each item
in the ratios of~\ouralg~with 
$\lambda \in (\frac{1}{\theta},1]$}\label{tab:ratio_alg}
\begin{tabular}{|c|c|c|c|c|c|c|c|c|c|c|c|c|c|}
\hline   
\multirow{2}{*}{Consistency} & Items & \multicolumn{4}{c|}{$\lambda \theta$} &  \multicolumn{4}{c|}{$\approxNT (1+ \frac{1}{\frac{\theta}{\lambda}-1} )$} & \multicolumn{4}{c|}{$\frac{\lambda\theta}{2}+\approxNT$} \\\cline{2-14}
                       & New notation & \multicolumn{4}{c|}{ $\lambda A$} & \multicolumn{4}{c|}{ $B_{\lambda}$ } & \multicolumn{4}{c|}{ $C_\lambda$ } \\\cline{1-1}\cline{2-14}

\hline
\hline
                       
\multirow{2}{*}{Robustness} & Items & \multicolumn{4}{c|}{$\frac{\theta}{\lambda}$} & \multicolumn{4}{c|}{$ \approxNT (1+ \frac{1}{\lambda\theta-1} )$}  & \multicolumn{4}{c|}{$\frac{\theta}{2\lambda}+\approxNT$} \\\cline{2-14}
                       & New notation & \multicolumn{4}{c|}{$\frac{A}{\lambda}$} & \multicolumn{4}{c|}{$B_{1/\lambda}$} & \multicolumn{4}{c|}{ $C_{1/\lambda}$ } \\\cline{1-1}\cline{2-14}
\end{tabular}
\end{table}

\begin{corollary}\label{cor:result}
When applying Christofides' $\approxNT$-approximation algorithm to 
a schedule with $\approxNT = 1.5$, \ouralg~achieves its best possible ratio, i.e., $(1.1514\lambda+1.5)$-consistency and $(1.5+\frac{1.5}{2.3028\lambda-1})$-robustness, where 
$\lambda \in (\frac{1}{\theta},1]$
and  $\theta$ is set to be $\frac{1+\sqrt{13}}{2} \approx 2.3028$.
\end{corollary}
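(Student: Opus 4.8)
The plan is to instantiate Theorem~\ref{thm:ouralg} at $\approxNT = 1.5$ and $\theta = \frac{1+\sqrt{13}}{2}$ and to show that in each of the two $\max$-expressions the third item dominates, so the ratios collapse to the stated closed forms. The single algebraic fact that does all the work is that $\theta = \frac{1+\sqrt{13}}{2}$ is the positive root of $x^2 - x - 3 = 0$, i.e.
\[ \theta^2 = \theta + 3, \qquad \theta^2 - 3 = \theta, \qquad \frac{\theta}{2} = \frac{1+\sqrt{13}}{4} \approx 1.1514 . \]
This is exactly the $\theta$ that minimizes the \smartstart\ bound $\max\{\theta, \approxNT(1+\frac{1}{\theta-1}), \frac{\theta}{2}+\approxNT\}$ at $\approxNT = 1.5$, which is why it is the natural plug-in; as a sanity check, at $\lambda = 1$ both gadgets of \ouralg\ reduce to the \smartstart\ rule, so every expression below should (and does) evaluate to $\frac{7+\sqrt{13}}{4} \approx 2.6514$ there.

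For \emph{consistency}, since $c(0)$ is by definition the error-free ratio we set $\varepsilon_f = |\hat{t}_f - t_f| = 0$, so Theorem~\ref{thm:ouralg} gives $\max\{\lambda A, B_\lambda, C_\lambda\}$ with $\lambda A = \lambda\theta$, $B_\lambda = \approxNT(1 + \frac{1}{\theta/\lambda - 1}) = \frac{\approxNT\,\theta}{\theta - \lambda}$, and $C_\lambda = \frac{\lambda\theta}{2} + \approxNT$. I would verify two inequalities on $\lambda \in (\frac{1}{\theta}, 1]$. First, $\lambda A \le C_\lambda$ reduces to $\frac{\lambda\theta}{2} \le 1.5$, i.e. $\lambda \le \frac{3}{\theta} \approx 1.30$, which holds since $\lambda \le 1$. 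Second, $B_\lambda \le C_\lambda$: because $\theta - \lambda > 0$ we may cross-multiply and cancel $\lambda > 0$, turning it into $\theta^2 - \lambda\theta - 3 \ge 0$, i.e. $\lambda\theta \le \theta^2 - 3 = \theta$, i.e. $\lambda \le 1$ — exactly the allowed range, with equality at $\lambda = 1$. Hence the consistency equals $C_\lambda = \frac{\lambda\theta}{2} + 1.5 = 1.1514\lambda + 1.5$.

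For \emph{robustness}, Theorem~\ref{thm:ouralg} gives $\max\{\frac{A}{\lambda}, B_{1/\lambda}, C_{1/\lambda}\}$ with $\frac{A}{\lambda} = \frac{\theta}{\lambda}$, $B_{1/\lambda} = \approxNT(1 + \frac{1}{\lambda\theta - 1})$, and $C_{1/\lambda} = \frac{\theta}{2\lambda} + \approxNT$, and I would show $B_{1/\lambda}$ dominates on $\lambda \in (\frac{1}{\theta}, 1]$ (so that $\lambda\theta - 1 > 0$). For $\frac{\theta}{\lambda} \le B_{1/\lambda}$: clearing denominators and dividing by $\theta$ yields $\lambda\theta - 1 \le 1.5\lambda^2$, i.e. $1.5\lambda^2 - \lambda\theta + 1 \ge 0$; the left-hand side is a convex quadratic in $\lambda$ whose minimum value is $1 - \frac{\theta^2}{6} = 1 - \frac{7+\sqrt{13}}{12} > 0$, so it is nonnegative for every $\lambda$. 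For $C_{1/\lambda} \le B_{1/\lambda}$: clearing the positive denominators reduces it to $\lambda\theta^2 - 3\lambda \le \theta$, i.e. again $\lambda(\theta^2 - 3) = \lambda\theta \le \theta$, i.e. $\lambda \le 1$. Hence the robustness equals $B_{1/\lambda} = 1.5(1 + \frac{1}{\lambda\theta - 1}) = 1.5 + \frac{1.5}{2.3028\lambda - 1}$.

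Every comparison above is a one-line manipulation, so I do not expect a genuine obstacle; the points requiring care are purely bookkeeping — checking that the denominators $\theta - \lambda$, $\frac{\theta}{\lambda} - 1$, and $\lambda\theta - 1$ are positive on $(\frac{1}{\theta}, 1]$ before cross-multiplying, and remembering to substitute $\theta^2 - 3 = \theta$ at the right moment so that the quadratic comparisons collapse to $\lambda \le 1$ — with the sole exception of the $\frac{\theta}{\lambda}$-versus-$B_{1/\lambda}$ step, which is settled instead by bounding a quadratic below by its vertex value $1 - \theta^2/6$.
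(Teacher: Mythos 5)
Your proposal is correct and follows essentially the same route as the paper: instantiate Theorem~\ref{thm:ouralg} with $\approxNT = 1.5$, $\theta = \frac{1+\sqrt{13}}{2}$ and $\varepsilon_f = 0$ for consistency, and observe that the maxima collapse to $C_\lambda = \frac{\lambda\theta}{2}+\approxNT$ and $B_{1/\lambda} = \approxNT\bigl(1+\frac{1}{\lambda\theta-1}\bigr)$ on $\lambda \in (\frac{1}{\theta},1]$. Your explicit algebraic verification of the dominance (via $\theta^2 = \theta + 3$ and the positive-definite quadratic $1.5\lambda^2 - \theta\lambda + 1$) is sound and in fact supplies the details that the paper only asserts.
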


\longdelete{
We incorporate numerical analysis into both the consistency ratio and the robustness ratio. We divide the range of $\lambda$, $( \frac{1}{\theta},1]$, into 10 and 100 equal parts, respectively, to examine the values of 
$\max\{ \lambda A, B_\lambda, C_\lambda \}$ 
and $\max \{\frac{A}{\lambda}, B_{1/\lambda}, C_{1/\lambda} \}$ under different settings of $\theta$.
}

\longdelete{
For the consistency ratio, one can observe that when the value of $\theta$ is set to $(1+\sqrt{13})/2$, the ratio achieves the best consistency performance 
while using the Christofides' algorithm with an approximation ratio of $\approxNT = 1.5$. More precisely, if we divide the range into 10 equal parts and 100 equal parts, one can further observe that the ratio is getting close to 2, as shown in Tables~\ref{tab:consistent-10} and~\ref{tab:consistent-100}, respectively.
In addition, Figure~\ref{consistent.png} illustrates that 
when 
the value of $\theta$ approaches $(1+\sqrt{13})/2$, \ouralg~not only achieves its lowest consistency ratio which almost meets the lower bound of 2, but also shows a gradual smooth and steady increase with the value of $\lambda$, implying the flexibility of $\lambda$. Note that \smartstart~achieves the currently best competitive ratio when $\theta$ is set to the same value of $(1+\sqrt{13})/2$.

\begin{table}[h]
\centering
\caption{The result of the consistency ratio by dividing the range of $\lambda$ into 10 equal parts}\label{tab:consistent-10}
\begin{tabular}{|c|c|c|c|c|c|c|}
\hline
The confidence level & $\theta = 1.7$ & $\theta = 2.0$ & $\theta = (1+\sqrt{13})/2$ & $\theta = 2.6$ & $\theta = 2.9$ & $\theta = 3.2$\\\hline
$\lambda = \frac{1}{\theta} $ & 2.38186  & 2.06896 & 2.065138 & 2.08 &2.095 & 2.11 \\\hline
$\lambda = 1 $ & 3.64285  & 3 & 2.65138 & 2.8 & 2.95 & 3.2 \\\hline
\end{tabular}
\end{table}

\begin{table}[h]
\centering
\caption{Analysis of the consistency ratios by dividing the range of $\lambda$ into 100 equal parts}\label{tab:consistent-100}
\begin{tabular}{|c|c|c|c|c|c|c|}
\hline
The confidence level & $\theta = 1.7$ & $\theta = 2.0$ & $\theta = (1+\sqrt{13})/2$ & $\theta = 2.6$ & $\theta = 2.9$ & $\theta = 3.2$\\\hline
$\lambda = \frac{1}{\theta} $ & 2.30217  & 2.00668 & 2.006513 & 2.008 &2.0095 & 2.011 \\\hline
$\lambda = 1 $ & 3.64285  & 3 & 2.65138 & 2.8 & 2.95 & 3.2 \\\hline

\end{tabular}
\end{table}

\begin{figure}[h] 
    \centering 
    \includegraphics[width=0.6\textwidth]{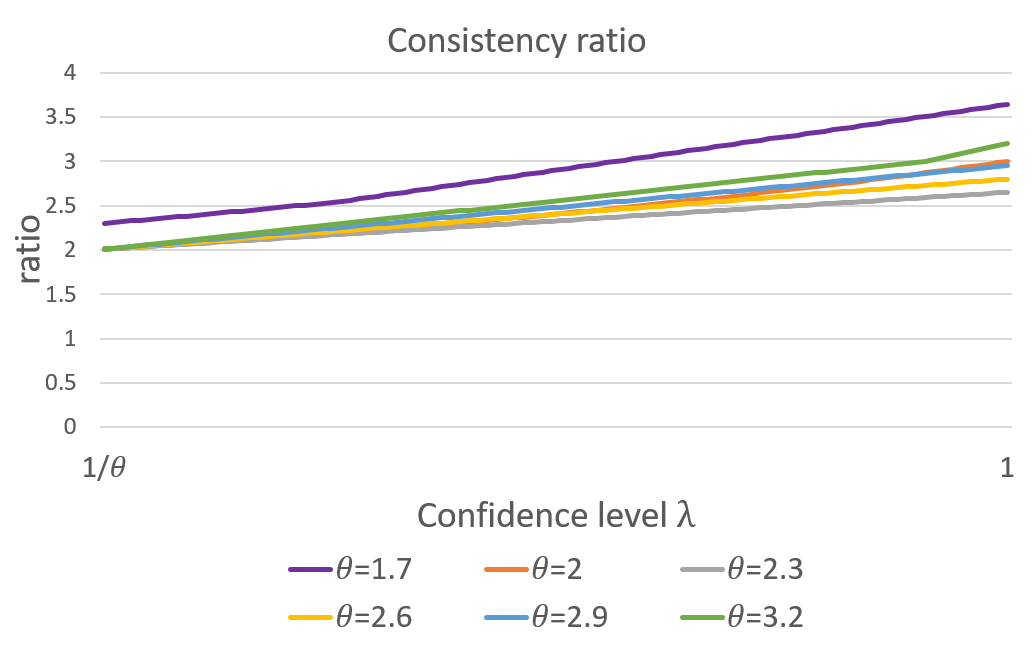}  
    \caption{Consistency ratios under different settings of $\theta$ by dividing the range of $\lambda$ into 100 equal parts}
    \label{consistent.png}
\end{figure}

\newpage
On the other hand, regarding the robustness ratio, as the competitive performance involves a tradeoff between consistency and robustness, the robustness ratio increases rapidly though
when $\lambda$ approaches the value of $\frac{1}{\theta}$. 
Therefore, letting the value of $\theta$ be $(1+\sqrt{13})/2 \approx 2.3028$ can derive the consistency ratio of $\max\{ \lambda A, B_\lambda, C_\lambda \}$ = $ C_\lambda$ and the robustness ratio of $\max = \{\frac{A}{\lambda}, B_{1/\lambda}, C_{1/\lambda} \}$ = $B_{1/\lambda}$.
Thus, \ouralg~can achieve its best ratio of $(1.1514\lambda+1.5)$-consistency and $(1.5+\frac{1.5}{2.3028\lambda-1})$-robustness. In other words, we can obtain a ratio very close to 2 when $\lambda = \frac{1}{\theta}$ and also maintain the ratio of 2.6514 when $\lambda = 1$.
\longdelete{
\begin{table}[h]
\centering
\caption{Analysis of the robustness ratios by dividing the range of $\lambda$ into 100 equal parts}\label{tab:robust}
\begin{tabular}{|c|c|c|c|c|c|c|}
\hline
The confidence level & $\theta = 1.7$ & $\theta = 2.0$ & $\theta = (1+\sqrt{13})/2 $ & $\theta = 2.6$ & $\theta = 2.9$ & $\theta = 3.2$\\\hline
$\lambda = \frac{1}{\theta} $ & 215.7857  & 151.5 & 116.63878 & 95.2499 & 84.210521 & 69.68181 \\\hline
$\lambda = 1 $ & 3.64285  & 3 & 2.65138 & 2.8 & 2.95 & 3.2 \\\hline
\end{tabular}
\end{table}
}


\begin{figure}[h] 
    \centering 
    \includegraphics[width=0.6\textwidth]{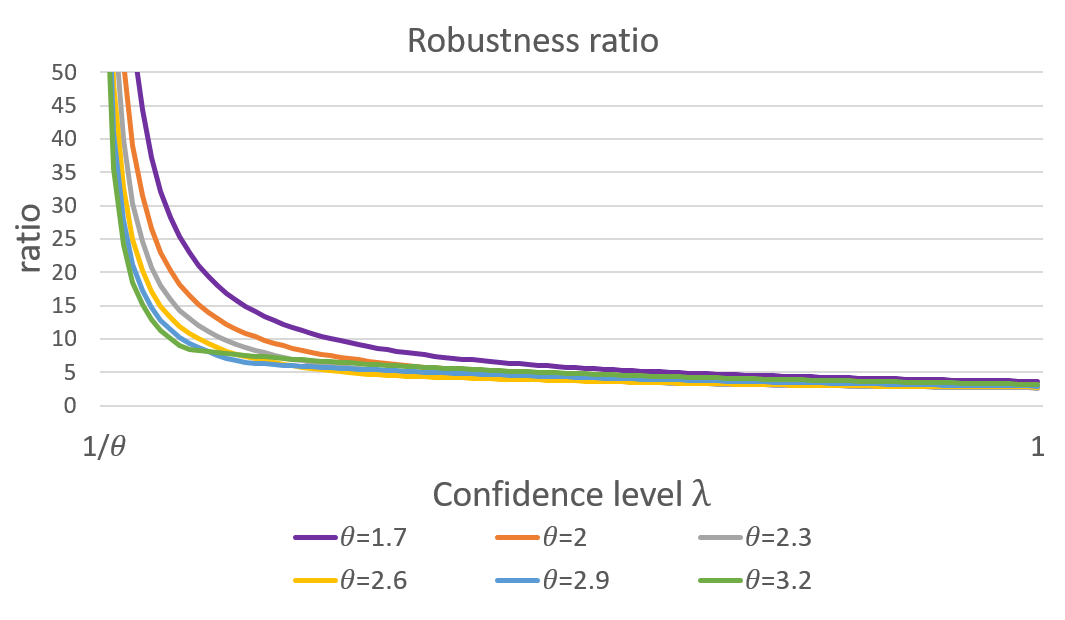}  
    \caption{Robustness ratios under different settings of $\theta$ by dividing the range of $\lambda$ into 100 equal parts}
    \label{robust.png}
\end{figure}
}

\begin{table}[hp]\caption{Algorithm 2 \smartstart~with online predictions (\ouralg)}
\label{alg:ouralg}
\begin{center}
\begin{tabular}{l}
\hline
\noalign{\smallskip}
    Given a waiting scaling parameter $\theta$, an approximation ratio of a schedule, \\ denoted by $\approxNT$, and the confidence level $\lambda \in (\frac{1}{\theta},1]$, 
    at any time $t$, the server \\ of \ouralg~enters one of the following states:\\ 
\noalign{\smallskip}
\hline
\noalign{\smallskip}
\begin{minipage}{4in}
    \vskip 2pt
    \begin{enumerate}
      \item[1.] Idle: The server is idle at the origin as there are no unserved requests.\\ 
      \item[2.] Predicting: The server is at the origin. When there exists a non-empty set of unserved requests $U_t$, the server is not idle. Let $R_{U_t}$ denote the schedule constructed from $U_t$, and let $t^\prime$ denote the start time of $R_{U_t}$ as determined by~\smartstart. Simultaneously, a binary prediction is made to decide whether the server should wait longer at the origin and begin after $t^\prime$, in anticipation of additional future requests, or start earlier than $t^\prime$ with the currently known requests. Accordingly, if the predictor recommends starting later than $t^\prime$, \ouralg~selects the 1st gadget (late-start); otherwise, if starting earlier is advised, \ouralg~selects the 2nd gadget (early-start).\\
      \item [3.] Waiting: According to the prediction state, \ouralg~enters one of the following gadgets accordingly:\\
          \begin{enumerate}
                \item [(i)] 1st gadget (late-start): 
                If the server cannot finish $R_{U_t}$ by time $\lambda \theta t $, that is, if $t + |R_{U_t}|>\lambda \theta t $, or equivalently, $t < \frac{|R_{U_t}|}{\lambda \theta-1}$, the server remains waiting at the origin. Otherwise, if $t + |R_{U_t}| \leq \lambda \theta t $ (i.e., $t \geq \frac{|R_{U_t}|}{\lambda \theta-1}$), the server starts performing the schedule $R_{U_t}$.\\
                \item [(ii)] 2nd gadget (early-start): 
                If the server cannot finish $R_{U_t}$ by time $\frac{\theta}{\lambda}t $, that is, if $t + |R_{U_t}|>\frac{\theta}{\lambda} t $, or equivalently, $t < \frac{|R_{U_t}|}{\frac{\theta}{\lambda}-1}$, the server remains waiting at the origin. Otherwise, if $t + |R_{U_t}| \leq \frac{\theta}{\lambda} t $ (i.e., $t \geq \frac{|R_{U_t}|}{\frac{\theta}{\lambda}-1}$), the server starts performing the schedule $R_{U_t}$.\\
            \end{enumerate}
      \item [4. ] Working: 
      The server is currently performing the schedule $R_{U_{t_p}}$, where $R_{U_{t_p}}$ denotes the preceding schedule, as it was constructed before the current time. 
      That is, it satisfies either $t_p + |R_{U_{t_p}}| \leq \lambda \theta t_p$ or $t_p + |R_{U_{t_p}}| \leq \frac{\theta}{\lambda}t_p$, depending on the gadget under which the schedule was initiated.
      If any new requests arrive during the interval $[ t_p , t_p + |R_{U_{t_p}}| ]$, the server ignores them and continues with the current schedule.\\
    \end{enumerate}
    \vskip 2pt
\end{minipage}
\vspace{-5pt}
\\
\hline
\end{tabular}
\end{center}
\end{table}

\section{Analysis}

Let $t_n$ 
denote
the arrival time of the 
last
request, i.e., $x_n = (a_n,b_n,t_n)$.
We then divide the analysis into two parts, as shown in Lemma~\ref{lem:waiting} and Lemma~\ref{lem:working}.
In the former, 
we focus on the scenario where the server is not engaged in a schedule at time $t_n$.
In the latter, we consider the other scenario where the server is actively performing a schedule at time $t_n$, indicating that it is in the working state.
Note that $t_i \leq |\text{OPT}|$ for any $i$ and $|R_{U_{t}}| \leq \approxNT |\text{OPT}|$ for any~$t$. 
Table~\ref{tab:notation} summarizes the notation that are used in the rest of the paper.


\begin{table}[h]
\centering
\caption{Notation}\label{tab:notation}
\begin{tabular}{ll}
\hline
Notation & Definition\\
\hline
$|\textsc{SSOP}|$ & The completion time (i.e. cost) of~\ouralg\\
$\costopt$ & The completion time (i.e. cost) of~\text{OPT}\\
$R_{U_{t_n}}$ & The schedule that serves the last request $x_n$\\
$U_{t_n}$ & The set of requests scheduled together with $x_n$ in $R_{U_{t_n}}$\\
$t^\prime_f$ & The time used by~\smartstart~to determine when to initiate $R_{U_{t_n}}$\\
$\hat{t}_f$ & The time used by~\ouralg~to determine when to initiate $R_{U_{t_n}}$\\
$R_{U_{\hat{t}_p}}$ & The schedule that immediately precedes the last schedule $R_{U_{t_n}}$\\
$U_{\hat{t}_p}$ & The set of requests scheduled in $R_{U_{\hat{t}_p}}$\\
$\hat{t}_p$  & The time used by~\ouralg~to determine when to initiate $R_{U_{\hat{t}_p}}$\\
$\varepsilon_f$ & The time difference between $\hat{t}_f$ and $t_n$, i.e.,  $\varepsilon_f=| \hat{t}_f - t_n|$ \\
\hline
\end{tabular}
\end{table}


\begin{table}[h]
\centering
\caption{The six cases in Lemma \ref{lem:waiting} involving the scenario in which the server is not working at the origin at time $t_n$}\label{tab:lem_waiting}
\begin{tabular}{|c|c|c|c|}
\hline
\ouralg's gadget  & Case  & Scenario & Competitive ratio\\
\hline
\multirow{3}{*}{Late-start} & Case 1  & $t^\prime_f \leq t_n \leq \hat{t}_f$ & $\min\{ \lambda A + \frac{\lambda \theta \varepsilon_f}{\costopt}, B_{1/\lambda} \}$\\\cline{2-4}
    & Case 2 & $t_n < t^\prime_f \leq \hat{t}_f$ & 
    $\min\{ \lambda A + \frac{\lambda \theta \varepsilon_f}{\costopt}, B_{1/\lambda} \}$\\\cline{2-4}
    & Case 3 & $t^\prime_f \leq \hat{t}_f \leq t_n $ & $\lambda A$ \\\cline{1-4}
\multirow{3}{*}{Early-start} & Case 4 & $\hat{t}_f \leq t_n \leq t^\prime_f $ & $min\{ B_\lambda + \frac{\varepsilon_f}{\costopt}, \frac{A}{\lambda}\}$ 
\\\cline{2-4}
    & Case 5 & $t_n < \hat{t}_f \leq t^\prime_f$ &  $B_\lambda$\\\cline{2-4}
    & Case 6 & $\hat{t}_f \leq t^\prime_f \leq t_n $ & 
    $min\{ B_\lambda + \frac{\varepsilon_f}{\costopt}, \frac{A}{\lambda}\}$
    \\\cline{1-4}
\end{tabular}
\end{table}

We first consider a situation in which the server is not working at time $t_n$, i.e., waiting at the origin. At the same time, the last request $x_n$ arrives, indicating the schedule $R_{U_{t_n}}$ is built.
As a result, the server enters the predicting state, and 
we analyze the relationship between 
three key time points
in Lemma~\ref{lem:waiting}: the current time $t_n$, 
the decision time $t^\prime_f$ at which~\smartstart~determines when to initiate $R_{U_{t_n}}$
and the decision time $\hat{t}_f$ at which~\ouralg~determines the same schedule.

\begin{lemma}\label{lem:waiting}
Suppose the server is not working at time $t_n$.
The competitive ratio of~\ouralg~is 
$\max\{\lambda \theta+\frac{\lambda \theta \varepsilon_f}{\costopt}, \approxNT (1+ \frac{1}{\frac{\theta}{\lambda}-1} ) + \frac{\varepsilon_f}{\costopt} \}$-consistent and $\max\{ \approxNT (1+ \frac{1}{\lambda \theta-1}), \frac{\theta}{\lambda}\}$-robust, where $\lambda \in (\frac{1}{\theta},1]$ 
and $\varepsilon_f = | \hat{t}_f - t_n|$.
\end{lemma}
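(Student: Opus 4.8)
The plan is to reduce the whole analysis to the last schedule $R_{U_{t_n}}$. Since by hypothesis the server is \emph{not} working at time $t_n$, the schedule $R_{U_{\hat{t}_p}}$ immediately preceding $R_{U_{t_n}}$ has already been completed and the server is sitting at the origin when $x_n$ arrives; because $x_n$ is the last request, nothing arriving afterwards can trigger a further schedule, so the completion time of \ouralg~is exactly $\max\{t_n,\hat{t}_f\}+|R_{U_{t_n}}|$, where $\hat{t}_f$ is the threshold of the gadget selected by the predictor, namely $\hat{t}_f=\frac{|R_{U_{t_n}}|}{\lambda\theta-1}$ under the late-start gadget and $\hat{t}_f=\frac{|R_{U_{t_n}}|}{\frac{\theta}{\lambda}-1}$ under the early-start gadget. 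I also record $t^\prime_f=\frac{|R_{U_{t_n}}|}{\theta-1}$, the threshold that \smartstart~would use. Because $\lambda\in(\frac{1}{\theta},1]$ forces $0<\lambda\theta-1\le\theta-1\le\frac{\theta}{\lambda}-1$, we get $t^\prime_f\le\hat{t}_f$ under late-start and $\hat{t}_f\le t^\prime_f$ under early-start, which is exactly the orientation built into Cases~1--6 of Table~\ref{tab:lem_waiting}.

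The core is then the six-way case split: late-start versus early-start, and within each, the position of $t_n$ relative to $t^\prime_f$ and $\hat{t}_f$. In every case I bound $|\ouralg|$ in two ways. The robustness-flavoured bound uses $|R_{U_{t_n}}|\le\approxNT\,\costopt$ together with the identities $\hat{t}_f+|R_{U_{t_n}}|=\lambda\theta\,\hat{t}_f=|R_{U_{t_n}}|(1+\frac{1}{\lambda\theta-1})$ (late-start) and $\hat{t}_f+|R_{U_{t_n}}|=\frac{\theta}{\lambda}\,\hat{t}_f=|R_{U_{t_n}}|(1+\frac{1}{\frac{\theta}{\lambda}-1})$ (early-start); when the start time equals $t_n$ (i.e.\ $\hat{t}_f\le t_n$) it instead combines the gadget inequality $|R_{U_{t_n}}|\le(\lambda\theta-1)t_n$, resp.\ $|R_{U_{t_n}}|\le(\frac{\theta}{\lambda}-1)t_n$, with $t_n\le\costopt$ to get $t_n+|R_{U_{t_n}}|\le\lambda\theta\,\costopt$, resp.\ $\le\frac{\theta}{\lambda}\,\costopt$. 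This produces the bounds $B_{1/\lambda}$ and $\frac{A}{\lambda}$ (and the smaller $\lambda A$, $B_\lambda$ in Cases~3 and 5). The consistency-flavoured bound keeps $t_n\le\costopt$ but substitutes $\hat{t}_f=t_n\pm\varepsilon_f$, the sign fixed by the case: under late-start with start time $\hat{t}_f$ this turns $\lambda\theta\,\hat{t}_f$ into $\lambda\theta(\costopt+\varepsilon_f)$, giving $\lambda A+\frac{\lambda\theta\varepsilon_f}{\costopt}$; under early-start with start time $t_n$ it writes $t_n+|R_{U_{t_n}}|=(\hat{t}_f+|R_{U_{t_n}}|)+\varepsilon_f=|R_{U_{t_n}}|(1+\frac{1}{\frac{\theta}{\lambda}-1})+\varepsilon_f$, giving $B_\lambda+\frac{\varepsilon_f}{\costopt}$. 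Taking the minimum of the two bounds in each case reproduces the last column of Table~\ref{tab:lem_waiting}.

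Finally I would assemble the lemma. Since in every case both bounds are valid upper bounds on the competitive ratio, I read off the robustness statement by taking the robustness-flavoured entry of each case and using $\lambda\le1$ (so $\lambda A\le\frac{A}{\lambda}$ and $B_\lambda\le B_{1/\lambda}$) to see that all six cases are dominated by $\max\{B_{1/\lambda},\frac{A}{\lambda}\}=\max\{\approxNT(1+\frac{1}{\lambda\theta-1}),\frac{\theta}{\lambda}\}$; likewise the consistency statement follows from the consistency-flavoured entries, each bounded by $\max\{\lambda A+\frac{\lambda\theta\varepsilon_f}{\costopt},\,B_\lambda+\frac{\varepsilon_f}{\costopt}\}=\max\{\lambda\theta+\frac{\lambda\theta\varepsilon_f}{\costopt},\,\approxNT(1+\frac{1}{\frac{\theta}{\lambda}-1})+\frac{\varepsilon_f}{\costopt}\}$. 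The step I expect to be the main obstacle is the first one: rigorously justifying that, when the server is idle or waiting at $t_n$, the finishing time is genuinely $\max\{t_n,\hat{t}_f\}+|R_{U_{t_n}}|$ — that neither $R_{U_{\hat{t}_p}}$ nor any scheduling-state subtlety can push the start of $R_{U_{t_n}}$ past $\max\{t_n,\hat{t}_f\}$ — and then being disciplined throughout the six branches about the sign of $\varepsilon_f$, since turning a $+\varepsilon_f$ into a $-\varepsilon_f$ in any branch would break the bound.
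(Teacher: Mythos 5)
Your proposal is correct and follows essentially the same route as the paper: the identical six-case split on the order of $t_n$, $t^\prime_f$, $\hat{t}_f$ under the two gadgets, with the same two bounds per case (the threshold identity $\hat{t}_f+|R_{U_{t_n}}|=\lambda\theta\hat{t}_f$ or $\frac{\theta}{\lambda}\hat{t}_f$ combined with $|R_{U_{t_n}}|\leq\approxNT\costopt$ and $t_n\leq\costopt$, and the $\varepsilon_f$-substitution $\hat{t}_f=t_n\pm\varepsilon_f$ for the consistency side), assembled by taking the maximum exactly as in Table~\ref{tab:lem_waiting}. The only cosmetic difference is that you package the start time as $\max\{t_n,\hat{t}_f\}$ and derive the case orientations from $\lambda\theta-1\leq\theta-1\leq\frac{\theta}{\lambda}-1$ up front, which the paper leaves implicit in its case definitions.
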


\begin{proof}
The cases discussed in Lemma \ref{lem:waiting} are summarized in Table \ref{tab:lem_waiting}. Cases 1 through 3 involve \ouralg~performing the schedule $R_{U_{t_n}}$ using the late-start gadget, while Cases 4 through 6 consider the scenarios related to the early-start gadget.
The following figures illustrate each case. 
Here, a solid line segment represents the server being in a working state, performing a schedule that starts from the origin and returns to the origin.
On the other hand, a dotted line segment indicates that the server is waiting, idle or predicting at the origin.

    \begin{itemize}
        \item{Case 1: $t^\prime_f \leq t_n \leq \hat{t}_f$, as shown in Fig.~\ref{new_case1-1}.}\\
        In this case, 
       \ouralg~performs the schedule $R_{U_{t_n}}$ at time $\hat{t}_f$. As $t^\prime_f \leq \hat{t}_f$ 
       (i.e. the late-start case),
       it satisfies $\hat{t}_f = \frac{|R_{U_{t_n}}|}{\lambda \theta-1}$,
        which implies that the server waits at the origin until $\hat{t}_f$ before 
        performing
        the schedule $R_{U_{t_n}}$.
        However,~\smartstart~performs $R_{U_{t_n}}$ immediately at time $t_n$ since $t^\prime_f \leq t_n$.
            \begin{figure}[htb] 
                \centering 
                \includegraphics[width=0.55\textwidth]{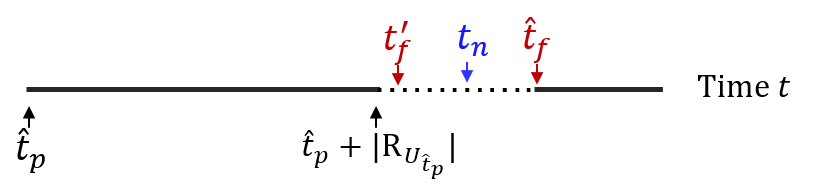}
                \caption{Case 1: $t^\prime_f \leq t_n \leq \hat{t}_f$}
                \label{new_case1-1}
            \end{figure}
            
         The completion time of~\ouralg~is given by $\hat{t}_f + |R_{U_{t_n}}| $, 
         where $|R_{U_{t_n}}|$ represents the total time required to perform $R_{U_{t_n}}$.
         If $\hat{t}_f + |R_{U_{t_n}}| $ is replaced with $ \lambda \theta \hat{t}_f $,
         the first bound is then expressed as follows:
                     \begin{align*}
                \costalg & = \hat{t}_f + |R_{U_{t_n}}|
                = \lambda \theta \hat{t}_f 
                = \lambda \theta (t_n+ \varepsilon_f)
                \\& = \lambda \theta t_n + \lambda \theta \varepsilon_f 
                \leq \lambda \theta \costopt + \lambda \theta \varepsilon_f
            \end{align*}
        For the second bound without the waiting error $\varepsilon_f$, we have
            \begin{align*}
                \costalg & = \hat{t}_f + |R_{U_{t_n}}|
                = \frac{|R_{U_{t_n}}|}{\lambda \theta-1} + |R_{U_{t_n}}|
                \\& = (1+ \frac{1}{\lambda \theta-1})|R_{U_{t_n}}|
                \\&\leq \approxNT (1+ \frac{1}{\lambda \theta-1})\costopt.
            \end{align*}  
        \item{Case 2: $t_n < t^\prime_f \leq \hat{t}_f$, as shown in Fig.~\ref{case1-2}.}\\
        In this scenario, both the server of \ouralg~and \smartstart~wait at the origin at time $t_n$ because $t_n < t^\prime_f$ and $t_n < \hat{t}_f$. 
        However,
        \ouralg~starts performing $R_{U_{t_n}}$ later than \smartstart~due to $t^\prime_f \leq \hat{t}_f$.
            \begin{figure}[htb] 
                \centering  
                \includegraphics[width=0.55\textwidth]{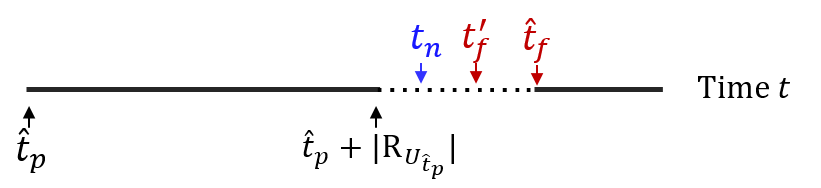} 
                \caption{Case 2: $t_n < t^\prime_f \leq \hat{t}_f$}
                \label{case1-2}
            \end{figure}
        
        Similar to Case 1, the completion time is 
        $\hat{t}_f + |R_{U_{t_n}}|$
        and $\hat{t}_f = t_n + \varepsilon_f$. Therefore, we 
        obtain the first bound:
            \begin{align*}
                \costalg & = \hat{t}_f + |R_{U_{t_n}}|
                = \lambda \theta \hat{t}_f 
                = \lambda \theta (t_n+ \varepsilon_f)
                \\& = \lambda \theta t_n + \lambda \theta \varepsilon_f 
                \leq \lambda \theta \costopt + \lambda \theta \varepsilon_f
            \end{align*}               
        The second bound is also identical to Case 1:
            \begin{align*}
                \costalg & = \hat{t}_f + |R_{U_{t_n}}|
                \leq \approxNT (1+ \frac{1}{\lambda \theta-1})\costopt
            \end{align*}

        \item{Case 3: $t^\prime_f \leq \hat{t}_f \leq t_n $, as shown in Fig.~\ref{case1-3}.}\\ 
        In this case,
        both the server of \ouralg~and \smartstart~begin performing the schedule $R_{U_{t_n}}$ at time $t_n$, 
        immediately upon the arrival of the last request~$x_n$.
            \begin{figure}[htb] 
                \centering 
                \includegraphics[width=0.55\textwidth]{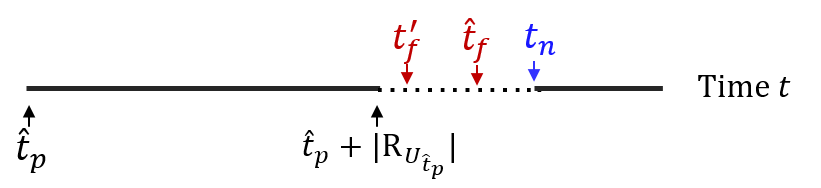} 
                \caption{Case 3: $t^\prime_f \leq \hat{t}_f \leq t_n $}
                \label{case1-3}
            \end{figure}
        
        Since $R_{U_{t_n}}$ is performed at $t_n$
        and $t^\prime_f \leq \hat{t}_f$ 
        (i.e. the late-start case),
        it satisfies $\hat{t}_f= \frac{|R_{U_{t_n}}|}{\lambda \theta -1} $ 
        and $\hat{t}_f \leq t_n$.
        Therefore, the completion time
        of \ouralg~
        is $t_n + |R_{U_{t_n}}|$ no later than $\lambda \theta t_n$. We thus obtain the
        following
        bound:
            \begin{align*}
                \costalg & = t_n + |R_{U_{t_n}}|
                \leq \lambda \theta t_n 
                \leq \lambda \theta \costopt
            \end{align*}
        \item{Case 4: $\hat{t}_f \leq t_n \leq t^\prime_f $, as shown in Fig.~\ref{case1-4}}.\\ 
        Given that $\hat{t}_f \leq t_n$, the server of \ouralg~and \smartstart~starts $R_{U_{t_n}}$ at $t_n$ and $t^\prime_f$, respectively, which demonstrates that \ouralg~achieves an earlier completion of $R_{U_{t_n}}$ compared against \smartstart.
            \begin{figure}[htb] 
                \centering 
                \includegraphics[width=0.55\textwidth]{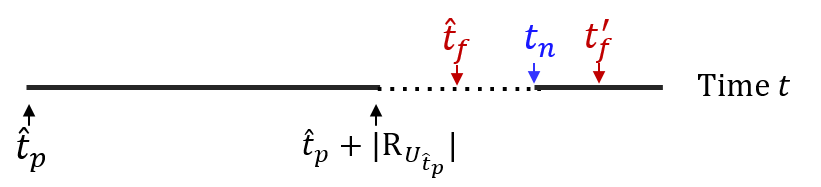} 
                \caption{Case 4: $\hat{t}_f \leq t_n \leq t^\prime_f $}
                \label{case1-4}
            \end{figure}
        
        Specifically,
        the completion time of~\ouralg~is $t_n + |R_{U_{t_n}}|$. In addition, as $\hat{t}_f \leq t^\prime_f$ 
        (i.e. the early-start case),
        it satisfies $\hat{t}_f = \frac{|R_{U_{t_n}}|}{\frac{\theta}{\lambda}-1}$ 
        and $\hat{t}_f \leq t_n$.
        We then obtain the bound:
            \begin{align*}
                \costalg & = t_n + |R_{U_{t_n}}|
                = \hat{t}_f + \varepsilon_f + |R_{U_{t_n}}|
                \\& = \frac{|R_{U_{t_n}}|}{\frac{\theta}{\lambda}-1} +  \varepsilon_f+|R_{U_{t_n}}|
                \\& = (1+ \frac{1}{\frac{\theta}{\lambda}-1} )|R_{U_{t_n}}|+ \varepsilon_f
                \\&\leq \approxNT (1+ \frac{1}{\frac{\theta}{\lambda}-1} )\costopt + \varepsilon_f
            \end{align*}
        Moreover, $t_n \geq  \frac{|R_{U_{t_n}}|}{\frac{\theta}{\lambda}-1}$ 
        since $t_n \geq \hat{t}_f$,
        which
        implies that the completion time $t_n + |R_{U_{t_n}}|$ is no later than $ \frac{\theta}{\lambda} t_n$. Therefore, we 
        obtain the second bound without the waiting error $\varepsilon_f$:
            \begin{align*}
                \costalg & = t_n + |R_{U_{t_n}}|
                \leq \frac{\theta}{\lambda} t_n
                \leq \frac{\theta}{\lambda}\costopt
            \end{align*}
        \item{Case 5: $t_n < \hat{t}_f \leq t^\prime_f$, as shown in Fig.~\ref{case1-5}.}\\
        In this case, 
        both the server of \ouralg~and \smartstart~wait at the origin at $t_n$. However, \ouralg~performs the execution of $R_{U_{t_n}}$ at $\hat{t}_f$, earlier than \smartstart.
            \begin{figure}[htb] 
                \centering 
                \includegraphics[width=0.55\textwidth]{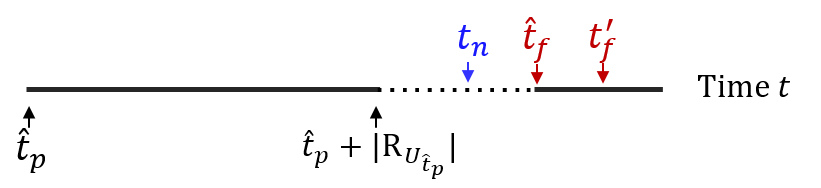} 
                \caption{Case 5: $t_n < \hat{t}_f \leq t^\prime_f$} 
                \label{case1-5}
            \end{figure}
        \\The completion time of~\ouralg~is $\hat{t}_f + |R_{U_{t_n}}|$. In addition, as $ \hat{t}_f \leq t^\prime_f$ (early-start), it satisfies $\hat{t}_f = \frac{|R_{U_{t_n}}|}{\frac{\theta}{\lambda}-1}$. We thus replace $\hat{t}_f$ with $\frac{|R_{U_{t_n}}|}{\frac{\theta}{\lambda}-1}$
        and have the bound: 
            \begin{align*}
                \costalg & = \hat{t}_f + |R_{U_{t_n}}|
                = \frac{|R_{U_{t_n}}|}{\frac{\theta}{\lambda}-1}+ |R_{U_{t_n}}| 
                = (1+ \frac{1}{\frac{\theta}{\lambda}-1} )|R_{U_{t_n}}|
                \\&\leq \approxNT (1+ \frac{1}{\frac{\theta}{\lambda}-1} )\costopt.
            \end{align*}
        \item{Case 6: $\hat{t}_f \leq t^\prime_f \leq t_n $, as shown in Fig.~\ref{case1-6}.}\\
        In this case, both the server of \ouralg~and \smartstart~begin performing $R_{U_{t_n}}$ at time $t_n$, 
        immediately upon the arrival of the last request~$x_n$.
        
            \begin{figure}[htb] 
                \centering 
                \includegraphics[width=0.55\textwidth]{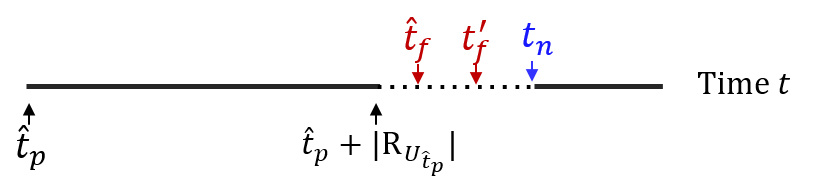}  
                \caption{Case 6: $\hat{t}_f \leq t^\prime_f \leq  t_n $}
                \label{case1-6}
            \end{figure}
        
        Since $R_{U_{t_n}}$ is performed at time $t_n$ and $\hat{t}_f \leq t^\prime_f$ 
        (i.e. the early-start case),
        it satisfies $\hat{t}_f = \frac{|R_{U_{t_n}}|}{\frac{\theta}{\lambda}-1}$. 
        We thus obtain the same bound as Case 4:
            \begin{align*}
                \costalg & = t_n + |R_{U_{t_n}}|
                = \hat{t}_f + \varepsilon_f + |R_{U_{t_n}}|
                \\& = \frac{|R_{U_{t_n}}|}{\frac{\theta}{\lambda}-1}+  \varepsilon_f+|R_{U_{t_n}}|
                \\& = (1+ \frac{1}{\frac{\theta}{\lambda}-1} )|R_{U_{t_n}}|+ \varepsilon_f
                \\&\leq \approxNT (1+ \frac{1}{\frac{\theta}{\lambda}-1} )\costopt + \varepsilon_f
            \end{align*}
                    
        In addition, since $\hat{t}_f = \frac{|R_{U_{t_n}}|}{\frac{\theta}{\lambda}-1}$ 
        and $\hat{t}_f \leq t_n$,
        the completion time of \ouralg,
        given by $t_n + |R_{U_{t_n}}|$, is no later than
        $\frac{\theta}{\lambda}t_n$. We thus obtain the bound:
            \begin{align*}
                \costalg & = t_n + |R_{U_{t_n}}|
                \leq \frac{\theta}{\lambda} t_n
                \leq \frac{\theta}{\lambda}\costopt
            \end{align*}
    \end{itemize}

The proof is complete. 
\end{proof}

\medskip

Next, 
we assume 
that 
the server is performing the schedule $R_{U_{\hat{t}_p}}$ at time~$t_n$, 
which is the schedule that immediately precedes the last schedule $R_{U_{t_n}}$.
Here, $\hat{t}_p$ denotes the start time of $R_{U_{\hat{t}_p}}$, which is also the decision time at which~\ouralg~determines 
when
to initiate $R_{U_{\hat{t}_p}}$.
Note that upon completing $R_{U_{\hat{t}_p}}$, the server proceeds to perform the final schedule $R_{U_{t_n}}$, and all requests in $U_{t_n}$ arrive strictly after time~$\hat{t}_p$.
\longdelete{
the server is performing the schedule $R_{U_{\hat{t}_p}}$ at time $t_n$,
where $\hat{t}_p$ represents 
the start time of schedule $R_{U_{\hat{t}_p}}$.
Note that 
after finishing $R_{U_{\hat{t}_p}}$, there will be the latest schedule $R_{U_{t_n}}$, and all the requests in $U_{t_n}$ arrive after time $\hat{t}_p$. 
}

\begin{lemma}\label{lem:working}
Suppose the server is working at time $t_n$. The competitive ratio of~\ouralg~is
 $\max\{ \lambda \theta + \frac{\lambda \theta \varepsilon_f}{\costopt}, \approxNT (1+ \frac{1}{\frac{\theta}{\lambda}-1} ), (\frac{\lambda\theta}{2}+\approxNT) \}$-consistent
 and $\max\{\approxNT (1+ \frac{1}{\lambda\theta-1} ),  (\frac{\theta}{2\lambda}+\approxNT)\}$-robust, where $\lambda \in (\frac{1}{\theta},1]$ 
and 
$\varepsilon_f = | \hat{t}_f - t_n|$.  
\end{lemma}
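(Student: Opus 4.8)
The plan is to follow the case structure of the proof of Lemma~\ref{lem:waiting}, adding the one genuinely new phenomenon. Since the server is working on $R_{U_{\hat{t}_p}}$ at time $t_n$, we have $\hat{t}_p \le t_n \le \hat{t}_p + |R_{U_{\hat{t}_p}}|$; the server returns to the origin at time $\hat{t}_p + |R_{U_{\hat{t}_p}}|$, and only then does it enter the predicting state for $R_{U_{t_n}}$. Hence the start time of the final schedule is $\hat{t}_f = \max\{\hat{t}_p + |R_{U_{\hat{t}_p}}|,\ \tau\}$, where $\tau$ is the late-start threshold $\frac{|R_{U_{t_n}}|}{\lambda\theta-1}$ or the early-start threshold $\frac{|R_{U_{t_n}}|}{\theta/\lambda-1}$ selected by the prediction for $R_{U_{t_n}}$, and $\costalg = \hat{t}_f + |R_{U_{t_n}}|$. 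First I would split on which of the two quantities attains $\hat{t}_f$.

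When $\hat{t}_f = \tau$ (the server must still wait after completing $R_{U_{\hat{t}_p}}$), the calculations are formally identical to Cases~1--2 and~4--5 of Lemma~\ref{lem:waiting}: using $\hat{t}_f \ge \hat{t}_p + |R_{U_{\hat{t}_p}}| \ge t_n$, so $\hat{t}_f = t_n+\varepsilon_f$, the substitution $\hat{t}_f+|R_{U_{t_n}}| = \lambda\theta\hat{t}_f = \lambda\theta(t_n+\varepsilon_f)$ gives the $\lambda\theta+\frac{\lambda\theta\varepsilon_f}{\costopt}$ bound, and $\hat{t}_f=\frac{|R_{U_{t_n}}|}{\lambda\theta-1}$ together with $|R_{U_{t_n}}|\le\approxNT\costopt$ gives $B_{1/\lambda}=\approxNT(1+\frac{1}{\lambda\theta-1})$; the early-start substitution $\hat{t}_f=\frac{|R_{U_{t_n}}|}{\theta/\lambda-1}$ gives $B_{\lambda}=\approxNT(1+\frac{1}{\theta/\lambda-1})$. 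So this sub-case reproduces exactly the first two entries of both the consistency and the robustness expression in the statement.

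The remaining sub-case, $\hat{t}_f=\hat{t}_p+|R_{U_{\hat{t}_p}}|$ (the server launches $R_{U_{t_n}}$ the instant it returns to the origin), does not occur in Lemma~\ref{lem:waiting}; it is where the term $C_\lambda=\frac{\lambda\theta}{2}+\approxNT$ (resp.\ $C_{1/\lambda}=\frac{\theta}{2\lambda}+\approxNT$) comes from, and I would treat it in the spirit of the $\frac{\theta}{2}+\approxNT$ term of Theorem~\ref{thm:smartstart}. Here $\costalg=\hat{t}_p+|R_{U_{\hat{t}_p}}|+|R_{U_{t_n}}|$, and I would combine three facts. (i) $R_{U_{\hat{t}_p}}$ was itself launched under its waiting condition, so $\hat{t}_p+|R_{U_{\hat{t}_p}}|\le\lambda\theta\hat{t}_p$ (preceding prediction late-start) or $\le\frac{\theta}{\lambda}\hat{t}_p$ (early-start); in particular $|R_{U_{\hat{t}_p}}|\le(\lambda\theta-1)\hat{t}_p$, so $\costalg\le\lambda\theta\hat{t}_p+|R_{U_{t_n}}|$. (ii) Every request of $U_{t_n}$ is released strictly after $\hat{t}_p$ (otherwise it would already have been placed in $R_{U_{\hat{t}_p}}$), so by time $\hat{t}_p$ the offline optimum has served none of $U_{t_n}$ and sits within distance $\hat{t}_p$ of the origin; since any walk that starts within distance $\hat{t}_p$ of the origin, visits all positions of $U_{t_n}$, and ends at the origin has length at least $\frac12\ell^{*}$ (where $\ell^{*}$ is a shortest origin tour through $U_{t_n}$, using that the larger of the ``out'' and ``back'' portions is at least their average), we get $\costopt\ge\hat{t}_p+\frac12\ell^{*}\ge\hat{t}_p+\frac{|R_{U_{t_n}}|}{2\approxNT}$. (iii) $|R_{U_{t_n}}|\le\approxNT\costopt$. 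Feeding $\hat{t}_p\le\costopt-\frac{|R_{U_{t_n}}|}{2\approxNT}$ from (ii) into (i) yields $\costalg\le\lambda\theta\costopt+(1-\frac{\lambda\theta}{2\approxNT})|R_{U_{t_n}}|$, and since $\lambda\theta\le\theta\le2\approxNT$ (the latter equivalent to $\approxNT\ge1$) the coefficient is nonnegative, so (iii) collapses this to $\costalg\le(\frac{\lambda\theta}{2}+\approxNT)\costopt=C_\lambda\costopt$. The robustness version runs the same way with $\lambda\theta$ replaced by $\theta/\lambda$: when $\theta/\lambda\le2\approxNT$ it gives $C_{1/\lambda}\costopt$, and in the complementary (small-$\lambda$) regime it gives at most $\frac{\theta}{\lambda}\costopt$, which is already subsumed by the robustness bound of Lemma~\ref{lem:waiting}.

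Taking the maximum over the two gadgets and all sub-cases, and discarding dominated terms, yields the stated consistency and robustness ratios. The step I expect to be the main obstacle is ingredient~(ii) in the immediate-start sub-case: one must argue cleanly that nothing in $U_{t_n}$ was available when $R_{U_{\hat{t}_p}}$ began, convert this into $\costopt\ge\hat{t}_p+\frac12\ell^{*}$ via the out-and-back argument, and then verify that the resulting inequality chain survives for every admissible pair of gadget choices for $R_{U_{\hat{t}_p}}$ and $R_{U_{t_n}}$ — in particular checking, on the robustness side, exactly when the early-start choice keeps the bound at $\frac{\theta}{2\lambda}+\approxNT$ and when it instead falls back onto the $\frac{\theta}{\lambda}$ term. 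The rest is the same bookkeeping already carried out in Lemma~\ref{lem:waiting}.
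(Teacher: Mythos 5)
Your proposal is correct and follows the same skeleton as the paper's proof: split on whether the server waits after completing $R_{U_{\hat{t}_p}}$ (your ``$\hat{t}_f=\tau$'' sub-case, the paper's Case~1) or launches $R_{U_{t_n}}$ immediately upon returning to the origin (the paper's Case~2), then sub-split by gadget; your Case~1 calculations are literally those of the paper (and of Lemma~\ref{lem:waiting}). The genuine difference is in the immediate-start case. The paper imports Lemma~\ref{lem:finalschedule} (Theorem~4 of Ascheuer et al.\ for \ignore): with $x_k=(a_k,t_k)$ the first request of $U_{t_n}$ served by OPT, $|R_{U_{t_n}}|\leq \approxNT\costopt-\approxNT\hat{t}_p+\approxNT d(o,a_k)$ and $\costopt\geq \hat{t}_p+d(o,a_k)$, combined with $d(o,a_k)\leq\costopt/2$ and the launch condition $\hat{t}_p+|R_{U_{\hat{t}_p}}|\leq\lambda\theta\hat{t}_p$ (resp.\ $\frac{\theta}{\lambda}\hat{t}_p$). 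You instead prove the needed inequality from scratch: since every request of $U_{t_n}$ is released strictly after $\hat{t}_p$, OPT's remaining walk after $\hat{t}_p$ serves all of $U_{t_n}$ and ends at the origin, and the doubling (out-and-back) argument gives $\costopt\geq\hat{t}_p+\tfrac12\ell^{*}\geq\hat{t}_p+\frac{|R_{U_{t_n}}|}{2\approxNT}$, which together with $|R_{U_{t_n}}|\leq\approxNT\costopt$ yields the same $\frac{\lambda\theta}{2}+\approxNT$ and $\frac{\theta}{2\lambda}+\approxNT$. Your route is more self-contained (no appeal to the \ignore\ lemma or to the request $x_k$), and it avoids the sign condition $\lambda\theta\geq\approxNT$ that the paper's substitution $\hat{t}_p\leq\costopt-d(o,a_k)$ silently requires, since in your computation the coefficient of $\hat{t}_p$ is always positive. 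Two small caveats: your claim that ``$\theta\leq2\approxNT$ is equivalent to $\approxNT\geq1$'' only holds for the canonical choice $\theta=\frac12(1+\sqrt{1+8\approxNT})$, though when the coefficient of $|R_{U_{t_n}}|$ is negative you can simply drop that term and the resulting $\lambda\theta\costopt$ (resp.\ $\frac{\theta}{\lambda}\costopt$) is absorbed by the other terms of the theorem; and your fallback to $\frac{\theta}{\lambda}$ in the regime $\frac{\theta}{\lambda}>2\approxNT$ exceeds the robustness expression stated in this lemma (which omits $\frac{\theta}{\lambda}$), but the paper's own use of $d(o,a_k)\leq\costopt/2$ in Case~2-2 needs the same restriction, so you are if anything being more explicit about a parameter edge case the paper glosses over.
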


\begin{proof}
We divide the scenario into two cases depending on the status of the server after 
completing
$R_{U_{\hat{t}_p}}$: 
either the server immediately begins performing the schedule $R_{U_{t_n}}$, or it waits at the origin for 
a while
before doing so.
The outcomes of the two cases are summarized in Table \ref{tab:lem_working},
where the different gadgets used in each case by \ouralg~are analyzed.

\begin{table}[h] 
\centering
\caption{The two cases in Lemma \ref{lem:working} considering the server being working on $R_{U_{\hat{t}_p}}$ at time $t_n$}\label{tab:lem_working}
\begin{tabular}{|c|c|c|c|}
\hline
Case  & Scenario  & \ouralg's gadget & Competitive ratio\\
\hline     
\multirow{2}{*}{Case 1} & \multirow{2}{*}{$\hat{t}_f >  \hat{t}_p+|R_{U_{\hat{t}_p}}|$}
                                 & Late-start & $\min\{ \lambda A+ \frac{\lambda \theta \varepsilon_f}{\costopt}, B_{1/\lambda}\}$ \\\cline{3-4}
                                 &&Early-start & $B_\lambda$ \\\cline{1-1}\cline{2-4}
\multirow{2}{*}{Case 2} & \multirow{2}{*}{$\hat{t}_f \leq \hat{t}_p+|R_{U_{\hat{t}_p}}|$}
                                 & Late-start  & $C_\lambda$\\\cline{3-4}
                                 && Early-start & $C_{1/\lambda}$\\\cline{1-1}\cline{2-4}
\end{tabular}
\end{table}
    
    \begin{itemize}
                \item{Case 1: 
                Assume the server
                of \ouralg~waits at 
                the
                origin for a while after finishing $R_{U_{\hat{t}_p}}$. That is, $\hat{t}_f >  \hat{t}_p+|R_{U_{\hat{t}_p}}|$}.
                
                We then analyze the relationship between $\hat{t}_f$ and $t^\prime_f$, which are the decision times at which~\ouralg\ and \smartstart, respectively, determine 
                when
                to initiate the schedule $R_{U_{t_n}}$.
                \begin{itemize}
                    \item{1-1. $\hat{t}_f \geq t^\prime_f$, as shown in Fig.~\ref{case2-1-1}. }\\
                    \begin{figure}[htb] 
                        \centering 
                        \includegraphics[width=0.55\textwidth]{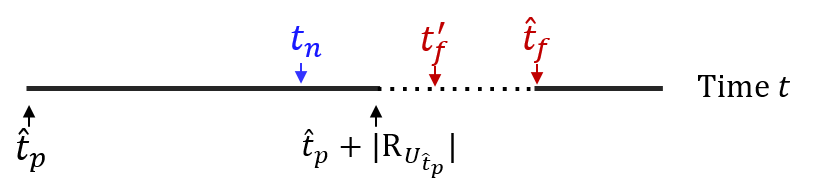}  
                        \caption{Case 1-1: $\hat{t}_f >  \hat{t}_p+|R_{U_{\hat{t}_p}}| $ and $\hat{t}_f \geq t^\prime_f$}
                        \label{case2-1-1}
                    \end{figure}
                    
                     The server of \ouralg~begins performing $R_{U_{t_n}}$ at $\hat{t}_f$. In addition, as $\hat{t}_f \geq t^\prime_f$ 
                     (i.e. the late-start case),
                     it satisfies $\hat{t}_f = \frac{|R_{U_{t_n}}|}{\lambda \theta-1}$.
                     Therefore, the completion time of \ouralg~is $\hat{t}_f + |R_{U_{t_n}}| =\lambda \theta \hat{t}_f $. We then have the bound: 
                     \begin{align*}
                        \costalg & = \hat{t}_f + |R_{U_{t_n}}|
                        = \lambda \theta \hat{t}_f 
                        = \lambda \theta (t_n+ \varepsilon_f)
                        \\& = \lambda \theta t_n + \lambda \theta \varepsilon_f 
                        \leq \lambda \theta \costopt + \lambda \theta \varepsilon_f
                    \end{align*}
                    We can
                    replace $\hat{t}_f$ with $\frac{|R_{U_{t_n}}|}{\lambda \theta-1}$ 
                    and 
                    obtain the bound without the waiting error $\varepsilon_f$:
                    \begin{align*}
                        \costalg & = \hat{t}_f + |R_{U_{t_n}}|
                        = \frac{|R_{U_{t_n}}|}{\lambda \theta-1} + |R_{U_{t_n}}|
                        \\& =(1+ \frac{1}{\lambda \theta-1})|R_{U_{t_n}}|
                        \\&\leq \approxNT (1+ \frac{1}{\lambda \theta-1})\costopt
                    \end{align*}
                    \item{1-2. $\hat{t}_f \leq t^\prime_f$, as shown in Fig.~\ref{case2-1-2}.}\\
                    The server of \ouralg~begins performing $R_{U_{t_n}}$ at $\hat{t}_f$,
                    which is earlier than the corresponding decision time of~\smartstart.
                        \begin{figure}[htb] 
                            \centering 
                            \includegraphics[width=0.55\textwidth]{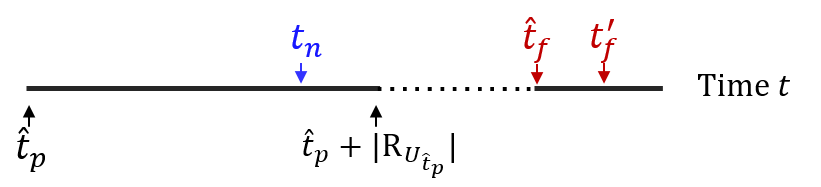}  
                            \caption{Case 1-2: $\hat{t}_f >  \hat{t}_p+|R_{U_{\hat{t}_p}}| $ and $\hat{t}_f \leq t^\prime_f$}
                            \label{case2-1-2}
                        \end{figure}
                        
                    The completion time of~\ouralg~is $\hat{t}_f + |R_{U_{t_n}}|$. In addition, as $ \hat{t}_f \leq t^\prime_f$ 
                    (i.e. the early-start case),
                    it satisfies $\hat{t}_f = \frac{|R_{U_{t_n}}|}{\frac{\theta}{\lambda}-1}$. We thus replace $\hat{t}_f$ with $\frac{|R_{U_{t_n}}|}{\frac{\theta}{\lambda}-1}$
                    and obtain the bound: 
                        \begin{align*}
                            \costalg & = \hat{t}_f + |R_{U_{t_n}}|
                            = \frac{|R_{U_{t_n}}|}{\frac{\theta}{\lambda}-1}+ |R_{U_{t_n}}| 
                            \\& =(1+ \frac{1}{\frac{\theta}{\lambda}-1} )|R_{U_{t_n}}|
                            \\&\leq \approxNT (1+ \frac{1}{\frac{\theta}{\lambda}-1} )\costopt.
                        \end{align*}
                \end{itemize}

                \longdelete{
                Case 2 presents that the server immediately starts the final schedule $R_{U_{t_n}}$ after completing $R_{U_{\hat{t}_p}}$. In the scenario, we show the bound of the final schedule $R_{U_{t_n}}$ in Lemma~\ref{lem:finalschedule}, which was mentioned in the proof of the \ignore~algorithm in \cite{ascheuer2000online}. Note that \ignore~is a non-zealous schedule-based algorithm. It implies that the server will start the schedule from the origin if there exists an unserved request. Therefore, we can apply the result to Case 2 of Lemma~\ref{lem:working}.

                \begin{restatable}[\cite{ascheuer2000online}, Theorem 4.]{rLem}{lemfr}\label{lem:finalschedule}
                    Let the last request be denoted by $x_n = (a_n,b_n,t_n)$ if the server is currently working on schedule $R_{U_{t_p}}$ at time $t_n$. Then, we can obtain:
                    \begin{center}
                        $|R_{U_{t_n}}| \leq \approxNT\costopt - \approxNT t_p + \approxNT d(o,a_j)$
                    \end{center}
                    We assume that the request $x_j = (a_j,b_j,t_j)$ in $U_{t_n}$ is the first request served during schedule $R_{U_{t_n}}$ determined by OPT. 
                \end{restatable}

                \begin{corollary}[\cite{ascheuer2000online}, Theorem 4.]\label{cor:optbound}
                    Regarding the requests $x_j = (a_j,b_j,t_j)$ and the future requests after $t_j$, the inequality is as follows: $\costopt \geq t_j + d(o,a_j) \geq t_p + d(o,a_j)$. 
                \end{corollary}
                }

                \item{Case 2: 
                Assume the server of~\ouralg~immediately 
                performs
                $R_{U_{t_n}}$ 
                upon completing
                $R_{U_{\hat{t}_p}}$, as shown in Fig.~\ref{case2-2}.
                That is, $\hat{t}_f \leq \hat{t}_p+|R_{U_{\hat{t}_p}}|$,
                and the completion time of~\ouralg~is $\hat{t}_p +|R_{U_{\hat{t}_p}}| + |R_{U_{t_n}}|$.}
                \begin{figure}[htb] 
                        \centering 
                        \includegraphics[width=0.6\textwidth]{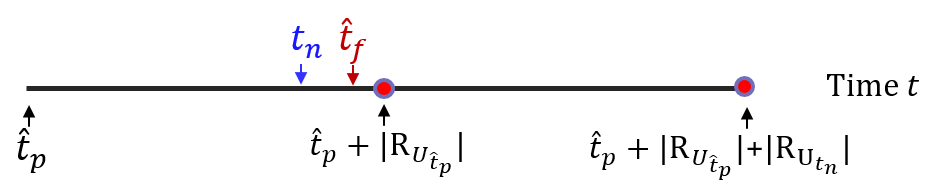} 
                        \caption{Case 2: $\hat{t}_f \leq \hat{t}_p+|R_{U_{\hat{t}_p}}| $
                        }
                        \label{case2-2}
                    \end{figure}
                
                Before getting into the analysis, 
                we refer to~\cite{ascheuer2000online} and follow the proof structure presented in Theorem 6.
                Moreover, $|R_{U_{t_n}}|$ can be bounded by $ \approxNT\costopt - \approxNT \hat{t}_p + \approxNT d(o,a_k)$, as stated in Lemma~\ref{lem:finalschedule}, based on Theorem 4 of the \ignore~algorithm in \cite{ascheuer2000online}. Note that \ignore~is a zealous schedule-based algorithm, implying that the server performs a schedule from the origin if there are unserved requests. We recall the previous result as follows: 
                    

                \begin{lemma}[\cite{ascheuer2000online}, Theorem 4.]\label{lem:finalschedule}
                    Let the last request be denoted by 
                    $x_n = (a_n,t_n)$,
                    and suppose it arrives while the server is performing the schedule $R_{U_{t_p}}$.
                    Note that $R_{U_{t_p}}$ was initiated at time $t_p$.
                    We can derive the following:
                    \begin{center}
                        $|R_{U_{t_n}}| \leq \approxNT \costopt - \approxNT t_p + \approxNT d(o,a_k)$
                    \end{center}
                    Let
                    $x_k = (a_k,t_k)$ be the first request in $U_{t_n}$ that is 
                    served 
                    in the schedule $R_{U_{t_n}}$ as determined by OPT.
                    For the request $x_k = (a_k,t_k)$ and the subsequent requests arriving after $t_k$, the following inequality holds: 
                    \begin{center}
                    $\costopt \geq t_k + d(o,a_k) \geq t_p + d(o,a_k)$
                    \end{center}
                \end{lemma}

                    \begin{itemize}
                        \item {2-1. 
                        If~\ouralg~performs
                        $R_{U_{\hat{t}_p}}$ later than~\smartstart~
                        (i.e. the late-start case),
                        it implies that the following conditions are satisfied:
                        $\hat{t}_p \geq \frac{|R_{U_{\hat{t}_p}}|}{\lambda\theta-1}$ and $\hat{t}_p +|R_{U_{\hat{t}_p}}| \leq \lambda\theta\hat{t}_p$.}
                        We then combine Lemma \ref{lem:finalschedule} and have the inequality: 
                        \begin{align*}
                            \costalg & = \hat{t}_p +|R_{U_{\hat{t}_p}}| + |R_{U_{t_n}}| 
                            \\&\leq \lambda\theta\hat{t}_p + \approxNT \costopt- \approxNT\hat{t}_p+\approxNT d(o,a_k)
                            \\&\leq(\lambda\theta-\approxNT)\hat{t}_p+\approxNT\costopt+\approxNT d(o,a_k).
                        \end{align*}

                         Based on another outcome presented in Lemma~\ref{lem:finalschedule}, it follows that $\costopt \geq \hat{t}_p + d(o,a_k)$.
                         Consequently, we have $\hat{t}_p \leq \costopt - d(o,a_k)$. Additionally, 
                         since $x_k = (a_k, t_k)$ represents an actual request and the server eventually needs to return to the origin, the distance $d(o, a_k)$ 
                         is
                         bounded by $\costopt/2$. We thus obtain the following bound:
                        \begin{align*}
                            \costalg &\leq(\lambda\theta-\approxNT)(\costopt-d(o,a_k))+\approxNT\costopt+\approxNT d(o,a_k)
                            \\& = \lambda\theta\costopt+(2\approxNT-\lambda\theta)d(o,a_k)
                            \\& \leq \lambda\theta\costopt + (\approxNT-\frac{\lambda\theta}{2})\costopt
                            \\& = (\frac{\lambda\theta}{2}+\approxNT)\costopt.
                        \end{align*}
                        \item {2-2. 
                        If~\ouralg~performs $R_{U_{\hat{t}_p}}$ earlier than~\smartstart~
                        (i.e. the early-start case),
                        it implies that the following conditions are satisfied: $\hat{t}_p \geq \frac{|R_{U_{\hat{t}_p}}|}{\frac{\theta}{\lambda}-1}$ and $\hat{t}_p +|R_{U_{\hat{t}_p}}| \leq \frac{\theta}{\lambda}\hat{t}_p$.}
                        Therefore, we apply a similar argument in the previous case (Case 2-1) and have the bound: 
                        \begin{align*}
                            \costalg & = \hat{t}_p +|R_{U_{\hat{t}_p}}| + |R_{U_{t_n}}|
                            \\&\leq \frac{\theta}{\lambda}\hat{t}_p + \approxNT \costopt- \approxNT\hat{t}_p+\approxNT d(o,a_k)
                            \\&\leq(\frac{\theta}{\lambda}-\approxNT)\hat{t}_p+\approxNT\costopt+\approxNT d(o,a_k)
                            \\&\leq(\frac{\theta}{\lambda}-\approxNT)(\costopt-d(o,a_k))+\approxNT\costopt+\approxNT d(o,a_k)
                            \\&\leq(\frac{\theta}{2\lambda}+\approxNT)\costopt
                        \end{align*}
                \end{itemize}   
            \end{itemize}   
The proof is complete. 
\end{proof} 

\medskip

Finally, combining the results 
from
Lemma~\ref{lem:waiting} and Lemma~\ref{lem:working}, we 
establish
the consistency bound 
as
$\max\{\lambda \theta+\frac{\lambda \theta \varepsilon_f}{\costopt}, \approxNT (1+ \frac{1}{\frac{\theta}{\lambda}-1} ) + \frac{\varepsilon_f}{\costopt}, (\frac{\lambda\theta}{2}+\approxNT) \} $
and the robustness bound 
as
$\max\{ \frac{\theta}{\lambda},\approxNT (1+ \frac{1}{\lambda\theta-1} ),  (\frac{\theta}{2\lambda}+\approxNT)\} $.

\section{Conclusion}

In this study, we have exploited learning-augmented techniques to develop the \ouralg~algorithm for the OLTSP and the OLDARP, 
which shows its consistency closely approaches to our proposed lower bound of 2. 
In particular, we have applied the concept of online predictions to determine when to perform every schedule in a schedule-based algorithm. 
Precisely, we simply make a binary decision, choosing whether the server should wait longer at the origin or perform the schedule early. 
Note that \ouralg~using online predictions does not need to know the total number of requests in advance, which is different from most of the previous studies in the learning-augmented framework. When the number of requests is known, 
the naive online algorithm by Nagamochi et al.~\cite{nagamochi1997complexity} showed the upper bound of 2.5, while Yu et al.~\cite{zhang} proved the lower bound of 2.5 for any schedule-based algorithms if the Christofides’ heuristic is used.
We also remark that online predictions over a waiting strategy for scheduling problems, e.g., job-shop scheduling, might be of independent interest. It would be worthwhile to extend the notion to other problems in an online fashion. 

\newpage
\nocite{*}
\bibliography{ref}
\bibliographystyle{plain}

\newpage

\appendix

\end{document}